\documentclass[aps,pra,showpacs,twoside,twocolumn,10pt]{revtex4-2}
\usepackage[colorlinks=true, citecolor=blue, urlcolor=blue, linkcolor = blue ]{hyperref}

\usepackage{graphicx}
\usepackage{bm}
\usepackage{setspace}
\usepackage{amsmath,amsfonts}
\usepackage{amsthm}

\usepackage{mathrsfs}
\usepackage{subfigure}
\usepackage{xcolor}

\usepackage{comment}
\usepackage{braket}
\theoremstyle{plain}
\usepackage{color}
\usepackage{amssymb}
\usepackage{amsthm}
\usepackage{amsfonts}
\usepackage{float}
\usepackage{tabularx}
\usepackage{graphicx}
\usepackage[export]{adjustbox}
\usepackage{mathtools}
\usepackage{esvect}
\usepackage{wrapfig}
\usepackage{amsthm}
\usepackage{verbatim}
\usepackage{bbm}
\usepackage[normalem]{ulem}

\usepackage{enumitem}
\usepackage{fmtcount}
\usepackage{booktabs}
\usepackage{csquotes}
\usepackage{epsfig}

\usepackage{tabularx}
\usepackage{graphicx}
\usepackage{amsmath}
\usepackage{braket}
\usepackage{latexsym}
\usepackage{bm}
\usepackage{graphics,epstopdf}
\usepackage{enumitem}
\usepackage{fmtcount}
\usepackage{booktabs}
\usepackage{csquotes}
\usepackage{epsfig}
\usepackage{hyperref}
\theoremstyle{plain}

\def\bea{\begin{eqnarray}}
\def\eea{\end{eqnarray}}
\def\ba{\begin{array}}
\def\ea{\end{array}}

\def\beq{\begin{equation}}
\def\eeq{\end{equation}}

\usepackage[normalem]{ulem}
\usepackage{float}
\usepackage{graphicx}  
\usepackage{dcolumn}          
\usepackage{amssymb}
\usepackage{appendix}
\usepackage{physics}   
\usepackage{mathtools}
\usepackage{esvect}
\usepackage{wrapfig}
\usepackage{amsthm}
\usepackage{verbatim}
\usepackage{bbm}

\usepackage[mathscr]{euscript}
\def\Tr{\operatorname{Tr}}

\def\({\left(}
\def\){\right)}
\def\[{\left[}
\def\]{\right]}



\newtheorem{theorem}{Theorem}

\newtheorem{Observation}{Observation}

\begin{document}

\title{Fluctuation in energy extraction from quantum batteries: How open should the system be to control it?}






\author{Anindita Sarkar}
\email{aninditasarkar@hri.res.in}

\author{Paranjoy Chaki} 
\email{paranjoychaki@hri.res.in}

\author{Priya Ghosh} 
\email{priyaghosh@hri.res.in}

\author{Ujjwal Sen}
\email{ujjwal@hri.res.in}

\affiliation{Harish-Chandra Research Institute,  A CI of Homi Bhabha National Institute, Chhatnag Road, Jhunsi, Prayagraj  211 019, India}

\begin{abstract}
We ask whether there exists a relation between controllability of the fluctuations in extractable energy of a quantum battery and (a) how open an arbitrary but fixed battery system is and (b) how large the battery is. We examine  
three classes of quantum processes for the energy extraction: unitary operations, completely positive trace-preserving 
(CPTP) 
maps, and arbitrary quantum maps, including physically realizable 
non-CPTP 
maps.
We show that all three process classes yield the same average extractable energy from a fixed quantum battery. Moreover, open systems are better at controlling fluctuations in fixed quantum batteries:  while random unitary operations result in nonzero fluctuation in the extractable energy,  the remaining two classes 
lead to vanishing fluctuations in extractable energy. Furthermore, when the auxiliary system used to implement 
the non-unitary physically realizable maps
is restricted up to a dimension $n$, fluctuation in extractable energy scales as $1/n$ for CPTP maps, outperforming the $\ln{n}/n$ scaling observed for general quantum maps.
Even within open dynamics, therefore, energy extraction via random CPTP maps exhibits greater resilience to fluctuation compared to processes based on arbitrary quantum maps.
We subsequently obtain that  fluctuations in extractable energy scale as the inverse of the battery's dimension for all three process classes.
Unitary maps, therefore, perform - in the sense of as low fluctuation as possible - equally well as more resource-intensive open maps, provided we have access to large quantum batteries.
The results underscore a fundamental trade-off between a battery’s performance and the resource cost of implementing the extraction processes.
\end{abstract}
\maketitle

\section{Introduction}
The conceptualization and implementation of quantum thermal 
devices~\cite{Bender_2000,PhysRevE.76.031105,PhysRevLett.105.130401,doi:10.1142/S1230161214400022,PhysRevLett.116.200601,Vinjanampathy01102016,Bhattacharjee2021,Myers_2022,potts2024quantumthermodynamics,CANGEMI20241} has gained growing importance in present-day quantum technologies. Due to the inherent fluctuations in thermodynamic quantities at the quantum scale, a deep understanding of these fluctuations is crucial for assessing the robustness and performance of thermal machines operating in the quantum regime.
Traditionally, the role of fluctuations in quantum thermodynamics has been described mainly through the formulation of fluctuation-dissipation relations~\cite{RevModPhys.81.1665,PhysRevResearch.1.033156,PhysRevResearch.3.023252}, fluctuation Theorems~\cite{PhysRevLett.92.230602,RevModPhys.83.771,PRXQuantum.1.010309}, thermodynamic uncertainty relations~\cite{PhysRevE.103.012111,PhysRevLett.126.010602,Menczel_2021,Razzoli2024} etc. Recently, there has been a lot of interest in studying the effect of fluctuations occurring in quantum devices~\cite{PhysRevE.103.012133,PhysRevLett.127.190603,Gerry_2022,PhysRevE.108.014137}, such as quantum thermal transistors~\cite{das2024fluctuationsoptimalcontrolfloquet}, quantum heat engines~\cite{Campisi_2014,Campisi_2015,PhysRevA.92.033854,PhysRevE.94.032116,PhysRevE.99.022104,PhysRevLett.123.080602,PhysRevA.101.010101,PhysRevResearch.2.032062,_obejko_2020,PhysRevE.103.032130,Bouton2021,PhysRevE.103.L060103,PhysRevA.104.012217,PhysRevResearch.3.L032041,denzler2021nonequilibriumfluctuationsquantumheat,CANGEMI20241,Razzoli2024,PhysRevE.106.014114,PhysRevE.106.014143,PhysRevE.106.024123,10.21468/SciPostPhys.12.5.168,alam2022twostrokequantummeasurementheat,li2022performancequantumheatengines,xiao2022finitetimequantumottoengine,PhysRevA.107.L040202,PhysRevA.108.032203,PhysRevE.103.012111,Chesi_2023,PhysRevA.104.012217,PhysRevE.108.044114,PhysRevE.106.024123,xiao2023thermodynamicsfluctuationsquantumheat,sarmah2024efficiencyfluctuationsheatengine}, quantum refrigerators~\cite{Campisi_2014,CANGEMI20241,PhysRevE.97.052145,Manzano2018,Jiao_2021,PhysRevApplied.19.034023}, quantum batteries~\cite{Friis_2018,random-battery-1,Crescente_2020,PhysRevLett.125.040601,10.21468/SciPostPhys.10.3.076,PhysRevLett.127.028901,e23111455,e24060820,random-battery-2,PhysRevE.109.014131,PhysRevA.110.022226,zahia2024entanglementdrivenenergyexchangetwoqubit}.

Quantum batteries~\cite{battery-first,b1,Bhattacharjee2021,RevModPhys.96.031001} are one of the quantum thermal devices that act as miniature energy storage devices. Over the past decade, research in quantum batteries has made significant strides in quantum technology; with advances in both charging~\cite{Bhattacharjee2021,C1,C6,PhysRevA.104.032207,saha2023harnessingenergyextractedheat,Downing_2024,PhysRevResearch.6.023136,PhysRevA.110.022425,puri2024floquetdrivenlongrangeinteractions,Shastri2025,ghosh2025constructiveimpactwannierstarkfield} and discharging~\cite{Allahverdyan_2004,PhysRevE.90.012127,PhysRevLett.123.190601,Gumberidze2019,e21080771,PhysRevA.104.L030402,PhysRevA.104.042209,chaki2023auxiliaryassistedstochasticenergyextraction,PhysRevLett.132.240401,chaki2024positivenonpositivemeasurementsenergy,one,chaki2024universalcompleteextractionenergyinvariant} protocols, as well as capacity analysis~\cite{PhysRevLett.131.030402}.
Moreover, studies on quantum batteries have extended to many-body spin chains~\cite{PhysRevA.101.032115,PhysRevA.105.022628}, non-Hermitian systems~\cite{PhysRevA.109.042207}, ultracold atoms~\cite{PhysRevA.106.022618},  noisy scenario~\cite{sen2023noisyquantumbatteries}, nonlinear system~\cite{bhattacharyya2024nonlinearityassistedadvantagechargersupportedopen}, and so on.
Experimentally, quantum batteries have been realized in NMR spin systems~\cite{PhysRevA.106.042601}, superconducting qubits~\cite{Hu_2022}, and organic microcavities~\cite{doi:10.1126/sciadv.abk3160}.

In this work, we investigate the robustness of the performance of a quantum battery under three types of energy extraction processes: random unitary operations, random completely positive trace-preserving (CPTP) maps, and arbitrary random quantum maps, including noncompletely positive trace-preserving (NCPTP) maps. In particular, we study the robustness of a quantum battery’s performance by analyzing two key aspects for each process—(i) the average amount of energy that can be extracted and (ii) the fluctuation in  extractable energy.

We find that the average extractable energy from a quantum battery remains the same, regardless of whether the extraction process involves random unitary operations, random CPTP maps, or arbitrary random quantum maps (including NCPTP maps). However, the fluctuation in  extractable energy differ for different energy extraction processes. Specifically, random unitary operations lead to finite fluctuation in extractable energy, except when the battery is initially in a maximally mixed state. In contrast, when random CPTP or general quantum maps (including NCPTP) are used as energy extraction processes, the fluctuations in extractable energy vanish.
Note that implementation of CPTP and NCPTP maps necessitate an auxiliary system and thus require significantly more resources compared to unitary operations. The auxiliary system should be initially in a product state with the battery for CPTP maps, whereas entanglement between the battery and the auxiliary is necessary to realize NCPTP maps. Hence, achieving lesser fluctuations in extractable energy demands more quantum resources, highlighting a trade-off between the resource cost and the efficiency of the quantum processes used for energy extraction from the battery.

Furthermore, when the dimension of the auxiliary system used to implement non-unitary, physically realizable maps is limited to a maximum of $n$, the behavior of fluctuations in extractable energy reveals a clear distinction between CPTP maps and general quantum maps. Specifically, for CPTP maps, the fluctuation in extractable energy decreases proportionally to $1/n$, indicating a faster decay as the auxiliary system's dimension increases. In contrast, for general quantum maps (including non-CPTP but physically realizable ones), the fluctuation follows a slower decay, scaling as $\ln n / n$.
This means that, under the constraint on the auxiliary system's dimension, CPTP maps offer a more favorable reduction of fluctuation in extractable energy, making them more efficient and stable for energy extraction tasks as the size of the auxiliary increases.

We now compare the fluctuations in extractable energy from a quantum battery under three different energy extraction processes: (i) random unitary operations, (ii) random CPTP maps constructed using a fixed-dimensional auxiliary system, and (iii) arbitrary quantum maps—including NCPTP maps—implemented with auxiliary systems of the same fixed dimension, wherever such an auxiliary is required.
Our findings reveal that random CPTP maps constructed with a fixed-dimensional auxiliary consistently produce more fluctuation in extractable energy compared to those arising from arbitrary quantum maps (including NCPTP) constructed with the same auxiliary dimension. Furthermore, the fluctuation in extractable energy due to random unitary operations can be either greater or smaller than those from random CPTP or general random quantum maps. However, in the limit of large battery dimension, random unitary operations always lead to higher fluctuation in extractable energy as compared to random CPTP maps implemented with an auxiliary system of fixed dimension.

Interestingly, for a fixed-dimensional quantum battery, when random CPTP and arbitrary quantum maps (including NCPTP) implemented using an auxiliary system of the same dimension $d_A$ are used as energy extraction processes, the fluctuations in  extractable energy from the battery scale as $1/d_A^2$ and $1/d_A$, respectively.
Therefore, in the case of open evolution, implemented via CPTP or more general quantum maps—including physically realizable non-CPTP maps—CPTP maps surprisingly outperform the broader class, despite non-CPTP operations requiring more resources. This highlights a trade-off between quantum battery performance and the resource cost of extraction protocols.

On the other hand, fluctuations in extractable energy from a quantum battery with a dimension $d_B$ scale as $1/{d_B}$, when the energy extraction process is random unitary or random CPTP or an arbitrary quantum map – including NCPTP ones. Therefore, for large-dimensional quantum batteries, unitary maps perform just as well as more resource-intensive open maps in terms of minimizing fluctuations, which increases the preferability of using a unitary map.

The remaining part of the paper is structured as follows. In Sec.~\ref{not}, we introduce the notations that will be used throughout the paper.  In Sec.~\ref{sec2}, we explore about the key ideas and concepts, including quantum maps and quantum battery. In Sec.~\ref{sec3}, 
we analyze the average energy that can be extracted using different types of quantum processes: random unitary maps, random CPTP maps and arbitrary random quantum maps. We find that all three processes yield the same average extractable energy.
In Sec.~\ref{sec4}, we study the behavior of the fluctuation in extractable energy generated by each process. We observe that although a finite amount of fluctuation in extractable energy persists in the case of unitary maps, the fluctuations in extractable energy are zero when CPTP maps and arbitrary quantum maps (including NCPTP maps) are used to extract energy from a given quantum battery. In the same section,  we further determine the scaling of the finite-term averaged fluctuations in extractable energy in the open system dynamics with respect to a fixed auxiliary dimension $n$, where the auxiliary dimensions over which the averaging is carried out are restricted to values up to $n$. Fluctuation in extractable energy arising from CPTP maps shows a smaller scaling with $n$, compared to that produced by arbitrary quantum maps. In Sec.~\ref{sec5}, we demonstrate the different possible scenarios based on the magnitude wise ordering of the fluctuations in extractable energy resulting from all three kinds of quantum processes in the moderate and high $d_B$ regions. Furthermore, we inspect the scaling of the fluctuations in extractable energy arising from CPTP maps and arbitrary quantum maps, with respect to the auxiliary dimension $d_A$ in the asymptotic regime. 
 In Sec.~\ref{sec6} we examine the scaling of fluctuations in extractable energy for all three kinds of processes with the battery dimension $d_B$ in the asymptotic limit. Finally, we conclude in Sec.~\ref{sec7}.

\section{Notations} \label{not}
In this section, we introduce the notations for key physical quantities used throughout the paper. The space of linear operators on a Hilbert space $\mathcal{H}$ is denoted by $\mathcal{L}(\mathcal{H})$. The subset of $\mathcal{L}(\mathcal{H})$ consisting of all positive semi-definite operators with unit trace is denoted by $\mathcal{D}(\mathcal{H})$. Any quantum state or density operator defined on $\mathcal{H}$ belongs to $\mathcal{D}(\mathcal{H})$. The notation $\mathcal{H}^d$ denotes a Hilbert space with dimension $d$.
The group of unitary operators and and the identity operator on a $d$-dimensional Hilbert space are represented by $\mathcal{U}(d)$ and $\mathbb{I}_d$, respectively. 
The adjoint of any operator $A$ is denoted as $A^\dagger$.
The Haar average of a quantity K is indicated by $\overline{K}$. Trace of any operator, say $``\cdot"$, is represented by $\Tr{(\cdot)}$, while partial trace of any operator, $``\cdot"$, with respect to a system $E$ is represented as $\Tr_E{(\cdot)}$.

\section{Setting the stage} 
\label{sec2}
In this paper, we analyze qualitatively as well as quantitatively the average extractable energy and fluctuations in extractable energy from a quantum battery under various classes of quantum dynamics. This section provides the necessary background for our work. Subsection~\ref{qma} offers a discussion of quantum dynamics, followed by a concise overview of fundamental concepts related to quantum batteries in subsection~\ref{qb_b1}.

\subsection{Quantum dynamics}
\label{qma}
Let $\mathcal{B}(\Lambda)$ denote the set of all quantum maps. A quantum map $\Lambda \in \mathcal{B}(\Lambda)$ transforms a quantum state $\rho$ defined on a Hilbert space $\mathcal{D}(\mathcal{H})$
into another quantum state $\rho'$ acting on a (same or different) Hilbert space $\mathcal{D}(\mathcal{H}')$
, i.e., $\rho' \coloneqq \Lambda (\rho)$. 
In general, the quantum dynamics acting on any quantum state $\rho$ can be classified into three categories: $(i)$ unitary map, $(ii)$ completely positive trace-preserving (CPTP) map and $(iii)$ noncompletely positive trace-preserving (NCPTP) map. 
Throughout the paper, we use the abbreviations ``CPTP" for completely positive trace-preserving maps and ``NCPTP" for noncompletely positive trace-preserving maps. All of these three types of quantum maps are described below in detail. 

\begin{itemize}
    \item \textbf{Unitary map:}
A quantum map $\Lambda_{\mathrm{U}} : \mathcal{D}(\mathcal{H}^d) \rightarrow \mathcal{D}(\mathcal{H}^d)$ is called a unitary map if it transforms a quantum state $\rho \in \mathcal{D}(\mathcal{H}^d)$ into another state $\rho' \in \mathcal{D}(\mathcal{H}^d)$ such that
$\rho' \coloneqq \Lambda_{\mathrm{U}}(\rho) = U \rho U^\dagger$,
where $U$ is a unitary operator satisfying $U U^\dagger = U^\dagger U = \mathbb{I}_d$.

\item\textbf{CPTP map:} 
A map $\Lambda_\mathrm{CP}$ is called completely positive if it not only preserves the positive semi-definiteness of all positive semi-definite operators but also ensures that, for any arbitrary integer $d’ \geq 2$, the extended map $\Lambda_\mathrm{CP} \otimes \mathbbm{I}_{d’}$ maps all positive semi-definite operators on the extended Hilbert space to positive semi-definite operators.
Additionally, if a completely positive map preserves the trace of every operator it acts upon, it is referred to as a completely positive trace-preserving (CPTP) map. 
The action of any CPTP quantum map on a quantum state $\rho \in \mathcal{D}(\mathcal{H}^d)$ can be represented in the Kraus operator formalism as
$\Lambda_{\mathrm{CPTP}}: \rho \rightarrow \Lambda_{\mathrm{CPTP}}(\rho) \coloneqq \sum_i K_i \rho K_i^\dagger$,
where the Kraus operators $\{K_i\}$ satisfy the completeness relation $\sum_i K_i^\dagger K_i = \mathbbm{I}_{d}$.
If a CPTP map admits only a single Kraus operator, then the map automatically reduces to a unitary quantum map. Hence, the set of unitary quantum maps is a strict subset of CPTP quantum maps.
\vspace{-3mm}
\\
\item \textbf{NCPTP map:}
Beyond CPTP maps, there exists a broader class of positive maps known as noncompletely positive trace-preserving (NCPTP) maps. Since a quantum map admits a Kraus operator representation if and only if it is a CPTP map~\cite{10.1093/acprof:oso/9780199213900.001.0001}, NCPTP maps cannot be expressed in the Kraus operator formalism.
\end{itemize}

\begin{figure}
\includegraphics[scale=0.15]{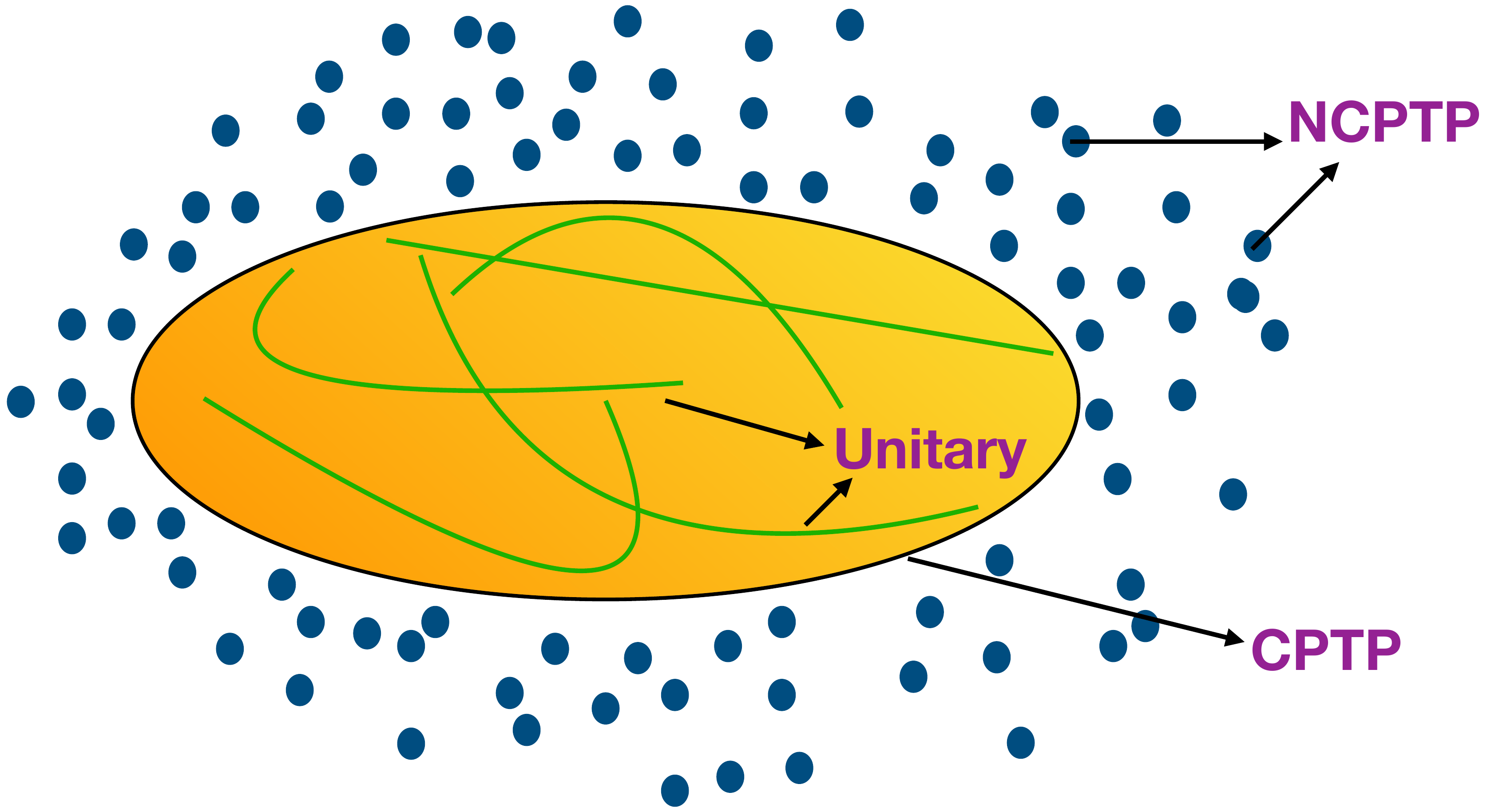}
\caption{\textbf{Schematic representation of all quantum dynamics.} 
The set of unitary dynamics, shown as green lines, forms a subset within the yellow oval-shaped region representing CPTP maps. Blue dots outside the yellow oval-shaped region denote NCPTP quantum maps. The black boundary around the yellow oval-shaped region highlights that the sets of CPTP and NCPTP maps are disjoint subsets. Together, they encompass the full set of quantum dynamics.}
\label{schem_2}
\end{figure}

The evolution of a closed quantum system is always governed by unitary dynamics. In contrast, open quantum systems — where the system interacts with an auxiliary system — evolve by either CPTP or NCPTP maps. Mathematically, the evolution of an open system is described by
$\rho^\prime_S = \Tr_A(U_{SA} \rho_{SA} U_{SA}^\dagger)$,
where $\rho^\prime_S$ denotes the final state of the system after interacting with the auxiliary through a global unitary operator $U_{SA}$, and $\rho_{SA}$ represents the initial joint state of the system and the auxiliary, denoted by $S$ and $A$ respectively.
Open system dynamics can be classified as CPTP or NCPTP, depending on the initial joint state of the system and the auxiliary. 
An initial product joint system-auxiliary state, that is, $\rho_{SA} = \rho_S \otimes \rho_A$, is sufficient for realizing CPTP dynamics~\cite{Nielsen_Chuang_2010}.
On the other hand, initial entanglement between the system and the auxiliary must be present to realize NCPTP dynamics.

Quantum dynamics can be hierarchically categorized into three subsets: a set of unitary maps, a set of CPTP quantum maps, and a set of NCPTP quantum maps. In this regard, the set of unitary dynamics forms a subset of the set of CPTP maps, which in turn is a subset of all quantum maps. Meanwhile, NCPTP maps represent a separate subset of all quantum dynamics that is disjoint from the set of CPTP maps. This classification is illustrated schematically in Fig.~\ref{schem_2}. In Fig.~\ref{schem_2}, green lines represent unitary dynamics as a subset of the yellow oval-shaped region denoting CPTP quantum maps. The blue dots outside the yellow oval-shaped region indicate NCPTP maps, while the black boundary around the yellow oval-shaped region implies that the sets of CPTP and NCPTP maps are disjoint subsets of all quantum maps.

\subsection{Quantum Battery}\label{qb_b1}
In this subsection, we move on to the discussion of quantum batteries. 
In general, a quantum battery~\cite{battery-first} is characterized by an initial quantum state and an associated Hamiltonian. Let the battery be initially in a state $\rho_B$, acting on a Hilbert space $\mathcal{H}_B$ of dimension $d_B$, with Hamiltonian $H_B$. Energy extraction from a quantum battery involves applying a quantum map that transforms the initial state of the battery to a final state $\rho^\prime_B$ such that the energy of the battery decreases, i.e., $\Tr(\rho^\prime_B H_B) \leq \Tr(\rho_B H_B)$, where $\Tr(\rho_B H_B)$ and $\Tr(\rho’_B H_B)$ denote the initial and final energies of the battery, respectively. The amount of energy extracted from the battery, denoted by $\mathcal{E}$, is defined as
$\mathcal{E} \coloneqq \Tr(\rho_B H_B) - \Tr(\rho’_B H_B)$.
The concept of a quantum battery is closely tied to the ideas of ergotropy and passive states. The maximum energy extractable from a quantum battery via unitary operations is known as ergotropy~\cite{battery-first,Allahverdyan_2004}, and is defined as
\begin{align*}
   \mathcal{E}_{\mathrm{max}} &\coloneqq \Tr(\rho_B H_B) - \min_{U_B} \Tr(U_B \rho_B U_B^\dagger H_B), \\
   &= \Tr(\rho_B H_B) - \Tr(\sigma_p H_B), 
\end{align*}
where $U_B$ is any unitary operator that acts on the battery. $\sigma_p$ denotes the state from which no energy can be extracted by any unitary processes from the battery, i.e., 
$\Tr(\sigma_p H_B) \leq \Tr(U_B \rho_B U_B^\dagger H_B) \hspace{2mm} \forall \hspace{1mm} U_B$. Such states are known as~\emph{passive states}~\cite{Lenard1978,battery-first}.

With the necessary groundwork laid, we are now ready to present the results of our paper.

\section{Same average extractable energy using three kinds of energy extraction processes}
\label{sec3}
In this section, we compare three classes of quantum processes — random unitary operations, random CPTP maps, and general random quantum maps, including NCPTP ones — from the perspective of the average extractable energy from a quantum battery each process yields.

Let $\overline{\mathcal{E}}_{\mathrm{X}}$ denote the average energy extracted from a quantum battery when the extraction is performed using a quantum process $\mathrm{X}$, where $\mathrm{X} = \mathrm{U}$ corresponds to random unitary operations, $\mathrm{X} = \mathrm{CPTP}$ refers to random CPTP maps, and $\mathrm{X} = \mathrm{G}$ means general quantum maps, including physically realizable NCPTP maps, and averaging is performed Haar uniformly over all the maps that belong to the process.
Given a quantum battery characterized by a quantum state $\rho_B$ and Hamiltonian $H_B$, the average extractable energy from the battery using random unitary operations is defined as 
\begin{equation}
 \overline {\mathcal{E}}_{\mathrm{U}} \coloneqq \int dU_B \left [\Tr(\rho_B H_B)-\Tr(U_B \rho_B U^{\dagger}_B H_B)\right],  \label{Eavg1}
\end{equation}
where the unitaries $U_B$ are drawn Haar-randomly from the unitary group $\mathcal{U}(d_B)$, and the integration is performed over the Haar measure (see Appendix~\ref{appA} for details about Haar measure).
To define the average extractable energy from a quantum battery using random CPTP maps or any arbitrary physically realizable quantum maps, we introduce an auxiliary system along with a global unitary that acts jointly on the battery and auxiliary. Let the auxiliary system initially be in the state $\rho_A$ on the Hilbert space $\mathcal{H}_A$, and let the initial joint state of the battery and auxiliary be $\rho_{BA} \in \mathcal{D}(\mathcal{H}_B \otimes \mathcal{H}_A)$, such that $\Tr_A(\rho_{BA}) = \rho_B$.
The average extractable energy using a quantum process $X$ (where $X$ can be random CPTP or general quantum maps, including NCPTP, denoted by $\mathrm{G}$) is defined as
\begin{align}
\label{Eavg2}
&\overline{\mathcal{E}}_{\mathrm{X}} \coloneqq \int \int dU_{BA} d\rho_A \biggl [\Tr\left(\rho_B H_B\right)- \nonumber\\&\Tr\left [\Tr_A\left (U_{BA} \rho_{BA} U^{\dagger}_{BA}\right) H_B\right]\biggr] , 
\end{align}
where the unitaries $U_{BA}$ and the auxiliary state $\rho_A$ are chosen Haar-uniformly from $\mathcal{U}(d_B d_A)$ and $\mathcal{D}(\mathcal{H}^{d_A})$ respectively.
In order to define $\overline{\mathcal{E}}_{\mathrm{CPTP}}$, the initial joint battery-auxiliary state $\rho_{BA}$ must be of the product form, i.e., $\rho_{BA} = \rho_B \otimes \rho_A$. In contrast, to define $\overline{\mathcal{E}}_{\mathrm{G}}$, the joint battery-auxiliary state $\rho_{BA}$ can be any valid quantum state, including entangled ones. 
\begin{theorem} 
\label{th1}
The average extractable energies from a quantum battery
using all three kinds of energy extraction processes – random unitary, random CPTP, and general maps, including physically realizable NCPTP quantum maps – are the same, that is,
\begin{equation}
    \overline{\mathcal{E}}_{\mathrm{U}} = \overline{\mathcal{E}}_{\mathrm{CPTP}} = \overline{\mathcal{E}}_{\mathrm{G}}. \label{Thm1}
\end{equation}
Where $\overline{\mathcal{E}}_{\mathrm{U}}$, $\overline{\mathcal{E}}_{\mathrm{CPTP}}$, and $\overline{\mathcal{E}}_{\mathrm{G}}$ denote the average extractable energy using random unitary, random CPTP, and all physically realizable quantum maps, respectively.
\end{theorem}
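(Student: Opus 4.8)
The plan is to reduce all three Haar averages to a single elementary first-moment integral over the unitary group and then read off that each gives the identical value. The only tool required is the twirl identity $\int dU\, U X U^{\dagger} = \frac{\Tr(X)}{d}\,\mathbb{I}_d$, valid for any operator $X$ on a $d$-dimensional Hilbert space: the left-hand side commutes with every element of $\mathcal{U}(d)$, so by Schur's lemma it must be proportional to $\mathbb{I}_d$, and taking the trace fixes the constant to $\Tr(X)/d$.

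First I would dispose of the unitary case. With $X=\rho_B$, $d=d_B$, and $\Tr(\rho_B)=1$, the identity gives $\int dU_B\, U_B \rho_B U_B^{\dagger} = \mathbb{I}_{d_B}/d_B$, so Eq.~\eqref{Eavg1} collapses to
\begin{equation}
\overline{\mathcal{E}}_{\mathrm{U}} = \Tr(\rho_B H_B) - \frac{\Tr(H_B)}{d_B}.
\end{equation}

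Next I would handle the CPTP and general cases in one stroke. The key rewriting is to fold the partial trace into a single trace against the padded observable, $\Tr[\Tr_A(U_{BA}\rho_{BA}U_{BA}^{\dagger})H_B] = \Tr[U_{BA}\rho_{BA}U_{BA}^{\dagger}(H_B\otimes\mathbb{I}_A)]$. Applying the twirl identity on the joint space of dimension $d_B d_A$, with $X=\rho_{BA}$ and $\Tr(\rho_{BA})=1$, yields $\int dU_{BA}\, U_{BA}\rho_{BA}U_{BA}^{\dagger} = \mathbb{I}_{d_B d_A}/(d_B d_A)$, and hence $\int dU_{BA}\,\Tr[U_{BA}\rho_{BA}U_{BA}^{\dagger}(H_B\otimes\mathbb{I}_A)] = \Tr(H_B)\,\Tr(\mathbb{I}_A)/(d_B d_A) = \Tr(H_B)/d_B$. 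Since this is independent of $\rho_A$, the remaining $d\rho_A$ integration in Eq.~\eqref{Eavg2} contributes only a factor of unity, giving
\begin{equation}
\overline{\mathcal{E}}_{\mathrm{CPTP}} = \overline{\mathcal{E}}_{\mathrm{G}} = \Tr(\rho_B H_B) - \frac{\Tr(H_B)}{d_B},
\end{equation}
which coincides with $\overline{\mathcal{E}}_{\mathrm{U}}$ and establishes Eq.~\eqref{Thm1}.

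The step I expect to need the most care is conceptual rather than computational: making explicit that the joint average is blind to the internal structure of $\rho_{BA}$. Because the twirl identity depends on $\rho_{BA}$ only through its trace, the outcome is the same whether $\rho_{BA}=\rho_B\otimes\rho_A$ (the CPTP setting) or $\rho_{BA}$ is entangled across $B$ and $A$ (the NCPTP setting), and the marginal constraint $\Tr_A(\rho_{BA})=\rho_B$ plays no role in the final value. This insensitivity is exactly what forces the three averages to agree, so I would state it prominently, emphasizing that the global Haar average erases all dependence on the initial joint state beyond its normalization.
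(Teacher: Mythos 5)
Your proposal is correct and follows essentially the same route as the paper's Appendix~\ref{B1}: both reduce everything to the first-moment Haar twirl $\int dU\, U X U^{\dagger} = \Tr(X)\,\mathbb{I}_d/d$, applied on $\mathcal{H}_B$ for the unitary case and on the joint battery--auxiliary space for the CPTP and general cases, yielding $\Tr(\rho_B H_B) - \Tr(H_B)/d_B$ in all three instances. The only cosmetic difference is that you fold the partial trace into the padded observable $H_B\otimes\mathbb{I}_A$ whereas the paper takes $\Tr_A$ of the twirled identity directly, and your closing observation that the answer is independent of $d_A$ supplies the same justification the paper gives for extending from fixed to arbitrary auxiliary dimension.
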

The detailed proof of Theorem.~\ref{th1} is given in Appendix~\ref{B1}. We have already mentioned CPTP and NCPTP maps are realized through an auxiliary system and global unitary. In Appendix~\ref{B1} we have proven that the average extractable energy for all three processes is the same, considering all fixed-dimensional auxiliary systems. This basically implies that the average extractable energy for all three processes is also the same if we average over all dimensional auxiliary systems.

\section{Fluctuation of work extraction differ in open and closed dynamics}
\label{sec4}
A machine is deemed efficient when it delivers high performance with minimal fluctuation. Likewise, assessing the performance of a quantum battery requires analyzing not only the average extractable energy but also the consistency of energy extraction. The latter can be effectively measured by fluctuation in extractable energy around the average, for a given energy extraction process. Hence, in this section, we analyze the fluctuations in extractable energy corresponding to the three types of energy extraction processes discussed earlier.

The fluctuation in extractable energy from a quantum battery with an initial state $\rho_B$ and Hamiltonian $H_B$ under random unitary operations is defined as
\begin{align}
(\Delta \mathcal{E}_\mathrm{U})^2 \coloneqq \overline{\mathcal{E}^2}_\mathrm{U} - (\overline{\mathcal{E}}_\mathrm{U})^2, \label{Efluc1}
\end{align}
where the first term is given by
\begin{equation*}
\overline{\mathcal{E}^2}_\mathrm{U} \coloneqq \int dU_B \left[\Tr(\rho_B H_B) - \Tr(U_B \rho_B U_B^\dagger H_B)\right]^2,
\end{equation*}
and $\overline{\mathcal{E}}_\mathrm{U}$ is defined in Eq.~\eqref{Eavg1}.

Similarly, the fluctuation in extractable energy from the battery with an initial state $\rho_B$ and Hamiltonian $H_B$ for energy extraction via random CPTP maps or general quantum maps (including NCPTP), is defined as
\begin{align}
(\Delta \mathcal{E}_\mathrm{X})^2 \coloneqq \overline{\mathcal{E}^2}_\mathrm{X} - (\overline{\mathcal{E}}_\mathrm{X})^2, \label{Efluc2}
\end{align}
where $\mathrm{X}$ corresponds to ${\mathrm{CPTP}, \mathrm{G}}$, and
\begin{align*}
     \overline{\mathcal{E}^2}_\mathrm{X} \coloneqq& \int \int dU_{BA} d\rho_A \biggl [\Tr\left(\rho_B H_B\right)\\&\nonumber- \Tr\left [\Tr_A\left (U_{BA} \rho_{BA} U^{\dagger}_{BA}\right) H_B\right]\biggr]^2.
\end{align*}
The unitaries $U_{BA}$ and auxiliary states $\rho_A$ are sampled in the same way as discussed earlier in Eq.~\eqref{Eavg2}. $\rho_{BA}$ is chosen to be a product state for random CPTP maps, while $\rho_{BA}$ can be any quantum state for arbitrary quantum maps.
\vspace{-4mm}

\subsection{Fluctuation in extractable energy of closed-system batteries}
\label{VA}
We now derive the explicit expression for the fluctuation in extractable energy from a quantum battery under random unitary operations. To proceed, let us consider a $\mathbb{SU}(d_B)$ basis of the Hilbert space $\mathcal{H}_B$ consisting of the matrices $\{\mathbb{I}_{d_B}, \lambda_1, \lambda_2, \ldots, \lambda_{d_B^2 - 1}\}$, where $\{\lambda_i\}$, with $i \in \{1,2, \ldots, (d_B^2 - 1)\}$, are Hermitian, traceless and mutually orthogonal matrices satisfying $\Tr(\lambda_i \lambda_j) = d_B \delta_{ij}$. They are commonly referred to as Gell-Mann matrices. 

\begin{Observation} 
\label{obs1}
Let a quantum battery be described by an initial state $\rho_B$ on a Hilbert space of dimension $d_B$, and a Hamiltonian $H_B$, decomposed as
$H_B = a_0 \mathbb{I}_{d_B} + \sum_{i=1}^{d_B^2 - 1} a_i \lambda_i$,
where $\{\lambda_i\}$ forms an orthogonal $\mathbb{SU}(d_B)$ basis with the identity operator $\mathbb{I}_{d_B}$, and $\vec{a} \in \mathbb{R}^{d_B^2}$ with $\vec{a} \coloneqq \{a_0,a_1, \ldots, a_{d_B^2-1}\}$ are the expansion coefficients. The fluctuation in extractable energy from the battery under random unitary operations is given by
\begin{equation*}
    (\Delta \mathcal{E}_\mathrm{U})^2 = \frac{d_B \alpha_1 - 1}{d_B^2 - 1} \sum_{i=1}^{d_B^2 - 1} a_i^2,
\end{equation*}
where $\alpha_1 (\coloneqq \Tr(\rho_B^2))$ denotes the purity of the state $\rho_B$.
\end{Observation}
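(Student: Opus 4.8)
The plan is to exploit the fact that a variance is unchanged under an additive constant. Writing $E_0 \coloneqq \Tr(\rho_B H_B)$, which does not depend on the random unitary, and $f(U_B) \coloneqq \Tr(U_B \rho_B U_B^\dagger H_B)$, the extractable energy is $\mathcal{E} = E_0 - f(U_B)$, so that $(\Delta \mathcal{E}_\mathrm{U})^2 = \overline{\mathcal{E}^2}_\mathrm{U} - (\overline{\mathcal{E}}_\mathrm{U})^2 = \overline{f^2} - (\overline{f})^2$. This collapses the whole problem onto the first and second Haar moments of $f$, and the constant $E_0$ (which carries the Bloch vector of $\rho_B$) drops out entirely, so that none of the state's coherences with $H_B$ survive in the final answer.

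First I would handle $\overline{f}$ through the first-moment identity $\int dU_B\, U_B \rho_B U_B^\dagger = \mathbb{I}_{d_B}/d_B$. Since the Gell-Mann matrices are traceless, $\Tr(H_B) = a_0 d_B$, and hence $\overline{f} = \Tr(H_B)/d_B = a_0$. This is the only place the first moment is needed, and it already fixes $(\overline{f})^2 = a_0^2$.

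The heart of the argument is $\overline{f^2}$. Writing $f^2 = \Tr\!\big[(U_B \rho_B U_B^\dagger)^{\otimes 2}\,(H_B \otimes H_B)\big]$ and invoking the second-moment twirl
\begin{equation*}
\int dU_B\, (U_B \otimes U_B)(\rho_B \otimes \rho_B)(U_B^\dagger \otimes U_B^\dagger) = \alpha\, \mathbb{I}_{d_B^2} + \beta\, S,
\end{equation*}
where $S$ is the swap operator on $\mathcal{H}_B \otimes \mathcal{H}_B$, the scalars $\alpha,\beta$ are determined by contracting with $\mathbb{I}$ and with $S$. The only inputs required are $\Tr(\rho_B \otimes \rho_B) = 1$ and $\Tr[S(\rho_B \otimes \rho_B)] = \Tr(\rho_B^2) = \alpha_1$, which give the linear system $\alpha d_B^2 + \beta d_B = 1$ and $\alpha d_B + \beta d_B^2 = \alpha_1$, and hence $\alpha = (d_B - \alpha_1)/[d_B(d_B^2-1)]$ and $\beta = (d_B\alpha_1 - 1)/[d_B(d_B^2-1)]$. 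Pairing against $H_B \otimes H_B$ then needs $\Tr(H_B)^2 = a_0^2 d_B^2$ and, via the orthogonality relation $\Tr(\lambda_i \lambda_j) = d_B \delta_{ij}$, the quantity $\Tr(H_B^2) = a_0^2 d_B + d_B \sum_i a_i^2$.

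Finally I would assemble $\overline{f^2} = \alpha\, a_0^2 d_B^2 + \beta\,(a_0^2 d_B + d_B \sum_i a_i^2)$ and subtract $(\overline{f})^2 = a_0^2$. I expect the $a_0$-dependent contributions to cancel cleanly, leaving $(\Delta \mathcal{E}_\mathrm{U})^2 = \frac{d_B\alpha_1 - 1}{d_B^2 - 1}\sum_i a_i^2$, as claimed. The main obstacle is purely the bookkeeping of the second-moment integral: once the twirl decomposition into identity and swap is set up and the coefficients $\alpha,\beta$ are correct, everything else is elementary algebra driven by the Gell-Mann orthogonality and tracelessness. A useful sanity check along the way is that $d_B \alpha_1 - 1 \geq 0$ with equality only for the maximally mixed state ($\alpha_1 = 1/d_B$), which reproduces the stated vanishing of the fluctuation in that limiting case.
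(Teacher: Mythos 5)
Your proposal is correct and follows essentially the same route as the paper's Appendix B.2: reduce the variance to the first and second Haar moments of the final energy, apply the two-copy twirl to $\rho_B^{\otimes 2}$ (your identity-plus-swap ansatz with coefficients fixed by contracting against $\mathbb{I}$ and $S$ reproduces exactly the standard formula the paper quotes), and finish with $\Tr(H_B)=a_0 d_B$ and $\Tr(H_B^2)=d_B(a_0^2+\sum_i a_i^2)$ from Gell-Mann orthogonality. The only cosmetic difference is that you derive the twirl coefficients from the linear system rather than citing the known second-moment formula.
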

The proof of Observation~\ref{obs1} is provided in Appendix~\ref{appObs1}. 

The fluctuation in extractable energy arising from unitary maps is non-zero in general, except when the initial state of the battery is the maximally mixed state $\rho_B=\frac{\mathbb{I}_{d_B}}{d_B}$, for which $\alpha_1=1/d_B$. 

\subsection{Fluctuations in extractable energy of open-system batteries}
\label{VB}
In this section, our goal is to determine the fluctuations in extractable energy from a quantum battery under the action of random CPTP maps and random general quantum maps. Before addressing the fully general case, let us first analyze a restricted scenario: the fluctuations in extractable energy when these maps are realized using an auxiliary system of fixed dimension, along with global unitaries acting on the fixed-dimensional joint battery–auxiliary system. Let us denote the fluctuations in extractable energy from an open-system quantum battery under random CPTP and general quantum maps, implemented using a fixed-dimensional auxiliary system and global unitaries acting on the joint fixed-dimensional battery-auxiliary system,
by $(\Delta\mathcal{E}_{\mathrm{CPTP}})^2_{d_A}$ and $(\Delta\mathcal{E}_{\mathrm{G}})^2_{d_A}$, respectively.

\begin{Observation} 
\label{obs2}
The fluctuations in extractable energy from a quantum battery with an initial state $\rho_B \in \mathcal{D}(\mathcal{H}^{d_B})$ and a Hamiltonian $H_B$ are given by
\begin{align*}
&(\Delta \mathcal{E}_\mathrm{CPTP})^2_{d_A} = \frac{d_B d_A \alpha - 1}{d_B^2 d_A^2 - 1} \sum_{i=1}^{d_B^2 - 1} a_i^2,\\
&(\Delta \mathcal{E}_\mathrm{G})^2_{d_A} = \frac{1}{d_B d_A + 1} \sum_{i=1}^{d_B^2 - 1} a_i^2,
\end{align*}
respectively, when random CPTP maps and random general quantum maps (including physically realizable NCPTP maps), both implemented only using an auxiliary system with a state $\rho_A$ of fixed dimension $d_A$, and global unitaries acting on the Hilbert space $\mathcal{H}^{d_B d_A}$, are used as energy extraction processes. Here, $\{a_i\}$ are the expansion coefficients of the Hamiltonian $H_B$ written in the $\mathbb{SU}(d_B)$ basis, and $\alpha \coloneqq \Tr(\rho_B^2) \int d\rho_A \Tr(\rho_A^2) $.
\end{Observation}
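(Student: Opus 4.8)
The plan is to reduce both formulas to a single second-moment computation over the Haar measure on $\mathcal{U}(d_Bd_A)$. Writing $d \coloneqq d_Bd_A$ and $M \coloneqq H_B \otimes \mathbb{I}_{d_A}$, the extractable energy is $\mathcal{E} = \Tr(\rho_B H_B) - \Tr\!\left(M\,U_{BA}\rho_{BA}U_{BA}^\dagger\right)$, whose first term is a nonrandom constant. Hence the fluctuation is exactly the variance of $T \coloneqq \Tr\!\left(M\,U_{BA}\rho_{BA}U_{BA}^\dagger\right)$. First I would compute the mean using the first-moment identity $\int dU_{BA}\,U_{BA}\rho_{BA}U_{BA}^\dagger = \mathbb{I}_d/d$, giving $\overline{T} = \Tr(M)/d = a_0$, consistent with Theorem~\ref{th1}.

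The central step is the second moment $\overline{T^2}$. I would write $T^2 = \Tr\!\left[(M\otimes M)\,U_{BA}^{\otimes 2}(\rho_{BA}\otimes\rho_{BA})(U_{BA}^\dagger)^{\otimes 2}\right]$ and invoke the second-moment Weingarten formula, using that the commutant of $U^{\otimes 2}$ is spanned by $\mathbb{I}_{d^2}$ and the swap $S$. For fixed $\rho_A$ this yields $\overline{U_{BA}^{\otimes 2}(\rho_{BA}\otimes\rho_{BA})(U_{BA}^\dagger)^{\otimes 2}} = c_1\,\mathbb{I}_{d^2} + c_2\,S$ with $c_1 = \frac{d-p}{d(d^2-1)}$ and $c_2 = \frac{dp-1}{d(d^2-1)}$, where $p \coloneqq \Tr(\rho_{BA}^2)$; these are fixed by taking traces against $\mathbb{I}_{d^2}$ and $S$, using $\Tr(\rho_{BA})=1$ and $\Tr\!\left[S(\rho_{BA}\otimes\rho_{BA})\right]=p$. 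Tracing against $M\otimes M$ and using $\Tr\!\left[(M\otimes M)S\right]=\Tr(M^2)$ gives $\overline{T^2} = c_1\,\Tr(M)^2 + c_2\,\Tr(M^2)$. Since $c_1,c_2$ are linear in $p$ and $\overline{T}$ is $\rho_A$-independent, averaging over $\rho_A$ amounts simply to replacing $p$ by its mean $\overline{p}$.

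Subtracting $(\overline{T})^2 = \Tr(M)^2/d^2$ and using the cancellation $c_1 - \frac{1}{d^2} = -\frac{c_2}{d}$, the variance collapses to $(\Delta\mathcal{E})^2 = \frac{d\,\overline{p}-1}{d(d^2-1)}\left(\Tr(M^2) - \frac{\Tr(M)^2}{d}\right)$. The remaining work is trace bookkeeping: from $\Tr(\lambda_i\lambda_j)=d_B\delta_{ij}$ one gets $\Tr(H_B)=a_0 d_B$ and $\Tr(H_B^2)=d_B\!\left(a_0^2+\sum_i a_i^2\right)$, and since $M=H_B\otimes\mathbb{I}_{d_A}$ I obtain $\Tr(M)=a_0 d_Bd_A$ and $\Tr(M^2)=d_Bd_A\!\left(a_0^2+\sum_i a_i^2\right)$, so the $a_0^2$ pieces cancel and $\Tr(M^2)-\Tr(M)^2/d = d_Bd_A\sum_i a_i^2$. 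This produces the unified expression $(\Delta\mathcal{E})^2 = \frac{d_Bd_A\,\overline{p}-1}{d_B^2d_A^2-1}\sum_i a_i^2$.

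The two cases are then distinguished purely through $\overline{p}$. For the CPTP process $\rho_{BA}=\rho_B\otimes\rho_A$ gives $p=\Tr(\rho_B^2)\Tr(\rho_A^2)$, and averaging over $\rho_A$ yields $\overline{p}=\Tr(\rho_B^2)\int d\rho_A\,\Tr(\rho_A^2)=\alpha$, recovering the first formula verbatim. For the broadest class of physically realizable (NCPTP-inclusive) maps, the canonical entangled extension is a purification of $\rho_B$ on $\mathcal{H}_B\otimes\mathcal{H}_A$, so $\rho_{BA}$ is pure and $p=1$ identically, hence $\overline{p}=1$; factoring $d_B^2d_A^2-1=(d_Bd_A-1)(d_Bd_A+1)$ then collapses the prefactor to $1/(d_Bd_A+1)$. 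The main obstacle I anticipate is executing the Weingarten second-moment calculus cleanly — keeping the $\mathbb{I}$-- and $S$--channel coefficients straight and verifying the cancellation $c_1-1/d^2=-c_2/d$ — together with justifying that the relevant joint state for the general class is pure, so that its purity, and not the free choice of $\rho_A$, fixes $\overline{p}=1$.
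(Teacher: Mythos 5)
Your proposal is correct and follows essentially the same route as the paper's proof in Appendix~\ref{appobs2}: both reduce the variance to the second-moment Haar integral on $\mathcal{U}(d_Bd_A)$, extract the $\mathbb{I}$- and swap-channel coefficients in terms of the purity $\Tr(\rho_{BA}^2)$, and distinguish the CPTP case ($\rho_{BA}=\rho_B\otimes\rho_A$, giving $\overline{p}=\alpha$) from the general case (pure entangled $\rho_{BA}$, giving $\overline{p}=1$), with all intermediate coefficients and the final trace bookkeeping matching. Your choice to lift the observable to $M=H_B\otimes\mathbb{I}_{d_A}$ and work entirely on the joint space is a mild streamlining that lets you bypass the paper's auxiliary identities (Eq.~\eqref{id} and $\Tr_{C,D}S_1=d_2S$), but the substance of the argument is the same, including the unproved restriction to pure $\rho_{BA}$ in the general case, which the paper also simply asserts.
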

A detailed proof of Observation~\ref{obs2} can be found in Appendix~\ref{appobs2}.

\begin{figure}
\includegraphics[scale=0.23]{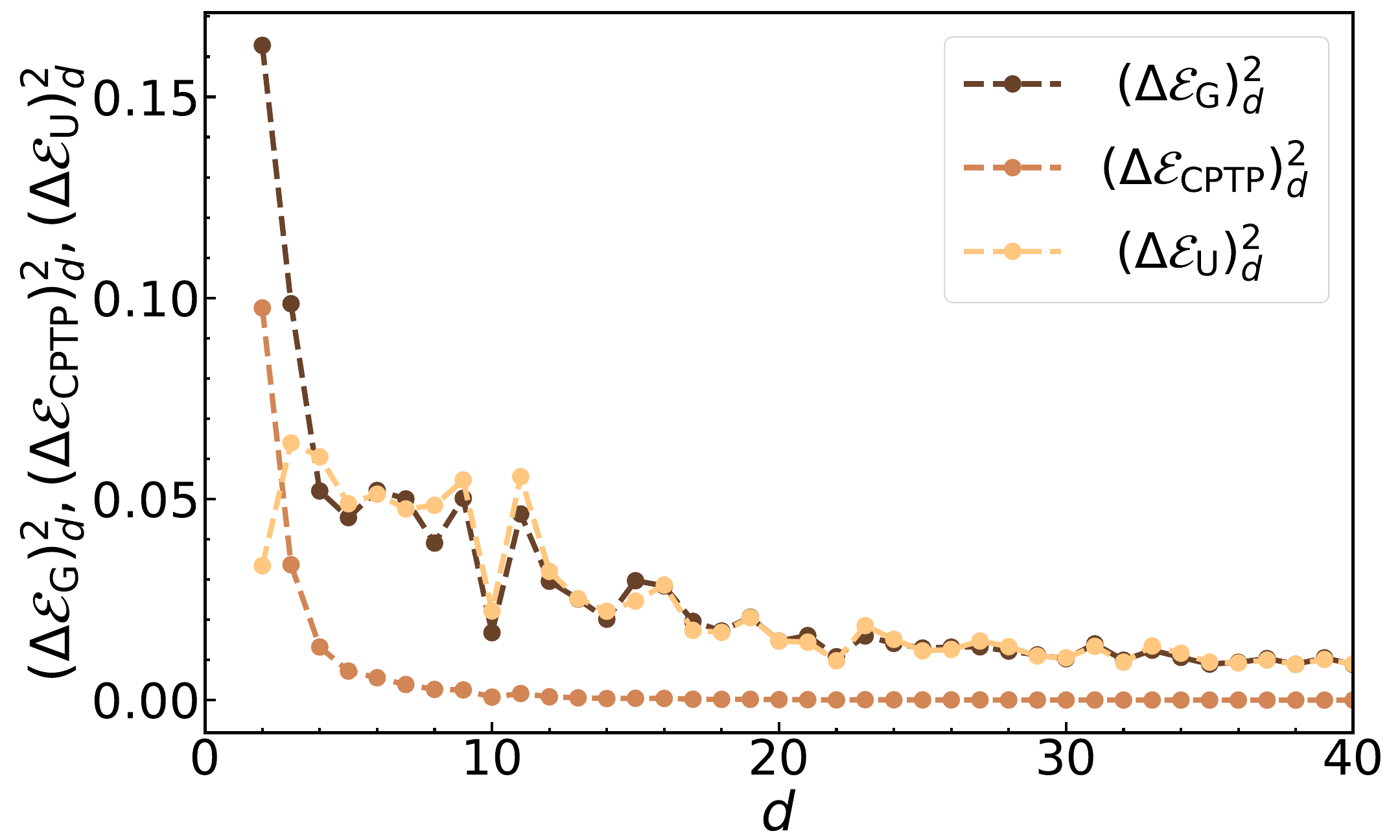}
\caption{\textbf{The behavior of fluctuations in extractable energy produced by unitary maps, CPTP maps, and general maps (including physically realizable NCPTP maps) with respect to the dimension of the battery}. The battery dimension is plotted on the horizontal axis and the fluctuations in extractable energy are plotted on the vertical axis. Here we assume the dimensions of the battery and the auxiliary are the same, i.e., $d_B=d_A=d$. All three kinds of energy extraction processes produce a notable amount of fluctuations in extractable energy in the small limit of $d$. For very small values of $d$, instances are seen where the fluctuation in extractable energy arising from the use of CPTP maps surpasses that from unitary maps, whereas the fluctuation in extractable energy resulting from general quantum maps exceeds its corresponding CPTP counterpart for all values of $d$. As $d$ increases, we can see that the fluctuation in extractable energy produced by CPTP maps becomes smaller compared to the other types of maps. For large values of $d$, fluctuations in extractable energy for all three types of extraction processes tend to zero. The parameters $a_1,a_2,...,a_d $ of the battery Hamiltonian $H_B$ are sampled randomly from uniformly distributed values in the interval $[0,1]$, while $a_{d+1}=a_{d+2}=...=a_{d^2-1}=0$. The vertical axis has the unit of squared energy, while the horizontal axis is dimensionless. }
\label{fig1}
\end{figure}

We numerically compute the fluctuations in extractable energy arising from all the three processes and plot the same with respect to the battery dimension $d_B=d$ in Fig.~\ref{fig1}. We further assume that the dimensions of the battery and the auxiliary are equal, i.e., $d_A=d_B=d$. The parameters $\{a_i\}_{i=1}^{d^2-1}$ of the battery Hamiltonian are taken in the following way: first $d$ coefficients are randomly chosen from the values uniformly distributed in the interval $[0,1]$, while the remaining $d^2-d-1$ coefficients are set to zero. The values of $d$ are taken from $d=2$ to $d=101$. All three types of energy extraction processes exhibit significant fluctuations in extractable energy when $d$ is small. At very low values of $d$, it is observed that the fluctuation due to CPTP maps exceeds that of unitary maps, while the fluctuation from general quantum maps consistently remains higher than that of CPTP maps across all values of $d$. As $d$ grows, the fluctuations from CPTP maps become smaller than those from the other two types of processes. In the limit of large $d$, the extractable fluctuations in extractable energy for all three types of processes diminish to zero.

We now relate the fluctuations in extractable energy from a quantum battery under random CPTP maps with those arising from random general quantum maps (including physically realizable NCPTP maps), where both types of maps are implemented only using an auxiliary system of fixed dimension and global unitaries acting on a fixed-dimensional joint battery-auxiliary system.

\begin{Observation}
\label{obs3}
The fluctuation in extractable energy from an open-system quantum battery, where the extraction process is a random CPTP map implemented using a fixed-dimensional auxiliary system and global unitaries acting on the joint fixed-dimensional battery-auxiliary system, is always less than or equal to the fluctuation in extractable energy when the extraction process is a random general quantum map (including physically realizable NCPTP maps), realized using the same dimensional auxiliary system and global unitaries acting on the same dimensional joint battery-auxiliary system.
\end{Observation}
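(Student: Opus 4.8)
The plan is to prove the inequality directly from the closed-form expressions already established in Observation~\ref{obs2}. Both fluctuations share the common nonnegative factor $\sum_{i=1}^{d_B^2-1}a_i^2$, so after cancelling it the claim $(\Delta\mathcal{E}_{\mathrm{CPTP}})^2_{d_A}\le(\Delta\mathcal{E}_{\mathrm{G}})^2_{d_A}$ reduces to a comparison of the two scalar prefactors, which depend only on $d_B$, $d_A$, and the purity parameter $\alpha\coloneqq\Tr(\rho_B^2)\int d\rho_A\,\Tr(\rho_A^2)$.

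Writing $d\coloneqq d_Bd_A$, I would bring the right-hand prefactor over the common denominator $d^2-1=(d-1)(d+1)$, so that the target inequality
\[
\frac{d\alpha-1}{d^2-1}\le\frac{d-1}{d^2-1}
\]
becomes a comparison of numerators. Since realizing any non-unitary map requires an auxiliary of dimension at least two, we have $d_A\ge2$ and $d_B\ge2$, hence $d\ge4$ and $d^2-1>0$; the inequality therefore holds if and only if $d\alpha-1\le d-1$, i.e.\ if and only if $\alpha\le1$.

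The remaining step is to verify $\alpha\le1$, which is immediate: the purity of any density operator satisfies $\Tr(\rho^2)\le1$, with equality exactly for pure states. Thus $\Tr(\rho_B^2)\le1$ and $\Tr(\rho_A^2)\le1$ for every admissible $\rho_A$, so the Haar average $\int d\rho_A\,\Tr(\rho_A^2)\le1$ as well, and the product $\alpha$ is at most one. This closes the argument.

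No genuine obstacle arises in this chain; the only point requiring care is the sign bookkeeping when rewriting over a common denominator, where one must confirm that $d\ge4$ makes $d-1$, $d+1$, and $d^2-1$ all strictly positive so that the direction of the inequality is preserved. For completeness I would also observe that the inequality is strict whenever the battery Hamiltonian is not proportional to the identity (so that $\sum_i a_i^2\neq0$), because for $d_A\ge2$ the average purity $\int d\rho_A\,\Tr(\rho_A^2)$ is strictly below one, the pure states forming a measure-zero subset of $\mathcal{D}(\mathcal{H}^{d_A})$.
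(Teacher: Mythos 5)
Your argument is correct and follows essentially the same route as the paper's proof: both reduce the claim to the scalar inequality $\frac{d_Bd_A\alpha-1}{d_B^2d_A^2-1}\le\frac{d_Bd_A-1}{d_B^2d_A^2-1}=\frac{1}{d_Bd_A+1}$ after cancelling the common nonnegative factor $\sum_{i=1}^{d_B^2-1}a_i^2$, and both close the gap by noting that $\alpha\le 1$ since purities are bounded by unity. Your additional remark on strictness (using that the average purity $\frac{2d_A}{d_A^2+1}$ is strictly below one for $d_A\ge 2$) is a correct refinement the paper does not state, but the core proof is the same.
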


\begin{proof}
$\alpha \coloneqq \Tr(\rho^2_B)\int \Tr(\rho^2_A) d\rho_A$ (defined in Observation~\ref{obs2}) represents the average of the purities of all possible states of auxiliaries corresponding to a fixed dimension, multiplied with the purity of the battery's initial state. Since the purity of any quantum state is always upper bounded by unity and the average of any physical quantity is always less than or equal to the maximum value that the quantity can have, $\alpha$ is also upper bounded by unity. As $\alpha \leq 1$, we have
\begin{align*}
(\Delta\mathcal{E}_{\mathrm{CPTP}})^2_{d_A}&=\frac{d_Bd_A\alpha-1}{d_B^2d_A^2-1} \sum_{i=1}^{d_B^2-1} a_i^2\leq\frac{d_Bd_A-1}{d_B^2d_A^2-1} \sum_{i=1}^{d_B^2-1} a_i^2,\\
&=\frac{1}{d_Bd_A+1} \sum_{i=1}^{d_B^2-1} a_i^2=(\Delta \mathcal{E}_{\mathrm{G}})^2_{d_A}. 
\end{align*}
This concludes the proof.
\end{proof}

\textit{Remark:}
Note that the fluctuations in extractable energy, $(\Delta\mathcal{E}_{\mathrm{CPTP}})^2_{d_A}$ and $(\Delta\mathcal{E}_{\mathrm{G}})^2_{d_A}$ are computed under random CPTP maps and general quantum maps realized using auxiliary systems of fixed dimension $d_A$.

We will determine the fluctuations in extractable energy from a quantum battery under random CPTP maps and general quantum maps, $(\Delta\mathcal{E}_{\mathrm{CPTP}})^2$ and $(\Delta\mathcal{E}_{\mathrm{G}})^2$ respectively, by averaging $(\Delta\mathcal{E}_{\mathrm{CPTP}})^2_{d_A}$ and $(\Delta\mathcal{E}_{\mathrm{G}})^2_{d_A}$ over all finite dimensional auxiliary systems and the corresponding global unitaries acting on the joint battery and auxiliary system.

\begin{theorem} 
\label{th2}
The fluctuations in extractable energy from a quantum battery vanish when the energy extraction process is implemented by using either random CPTP quantum maps or random general quantum maps (including all physically realizable NCPTP maps), i.e., $ (\Delta \mathcal{E}_{\mathrm{CPTP}})^2 = 0 \quad \mathrm{and} \quad (\Delta \mathcal{E}_{\mathrm{G}})^2 = 0$. 
\end{theorem}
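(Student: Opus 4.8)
The plan is to build directly on the closed-form, fixed-auxiliary-dimension expressions furnished by Observation~\ref{obs2} and then perform the averaging over all finite-dimensional auxiliary systems announced in the paragraph preceding the theorem. Concretely, I would read ``averaging over all finite dimensional auxiliary systems'' as the limit of the finite-term average over auxiliary dimensions,
\begin{equation*}
(\Delta\mathcal{E}_{\mathrm{X}})^2 \coloneqq \lim_{n\to\infty}\frac{1}{n}\sum_{d_A=1}^{n}(\Delta\mathcal{E}_{\mathrm{X}})^2_{d_A},\qquad \mathrm{X}\in\{\mathrm{CPTP},\mathrm{G}\},
\end{equation*}
which is exactly the ``finite-term averaged fluctuation'' restricted to dimensions up to $n$ that the introduction refers to. The whole statement then reduces to controlling the two sequences $(\Delta\mathcal{E}_{\mathrm{CPTP}})^2_{d_A}$ and $(\Delta\mathcal{E}_{\mathrm{G}})^2_{d_A}$ as $d_A$ grows, so that the lower limit of summation is immaterial to the final limit.

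The key step is to show that each summand tends to zero as $d_A\to\infty$. For general maps this is immediate, since $\sum_i a_i^2$ is a fixed battery-dependent constant and $(\Delta\mathcal{E}_{\mathrm{G}})^2_{d_A}=\tfrac{1}{d_Bd_A+1}\sum_i a_i^2\to 0$. For CPTP maps I would invoke the bound $\alpha\leq 1$ established in the proof of Observation~\ref{obs3}, which yields the sandwich $0\leq(\Delta\mathcal{E}_{\mathrm{CPTP}})^2_{d_A}\leq(\Delta\mathcal{E}_{\mathrm{G}})^2_{d_A}$: the left inequality holds because a fluctuation (a variance) is non-negative, and the right inequality is precisely Observation~\ref{obs3}. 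Squeezing then forces $(\Delta\mathcal{E}_{\mathrm{CPTP}})^2_{d_A}\to 0$ as well. Finally I would apply the elementary fact that the Ces\`aro average of a sequence converging to zero also converges to zero, which immediately gives $(\Delta\mathcal{E}_{\mathrm{CPTP}})^2=0$ and $(\Delta\mathcal{E}_{\mathrm{G}})^2=0$.

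The step I expect to require the most care is pinning down the averaging prescription and, should one want the sharper scaling rates advertised in the introduction rather than merely the vanishing of the limit, evaluating the auxiliary purity integral $\int d\rho_A\,\Tr(\rho_A^2)$ as a function of $d_A$. Under the Hilbert--Schmidt measure this average purity decays like $1/d_A$, which upgrades the CPTP summand from a naive $1/d_A$ to a genuine $1/d_A^2$ decay through the factor $\alpha$ sitting in the numerator $d_Bd_A\alpha-1$; the Ces\`aro average of $1/d_A^2$ then scales as $1/n$, whereas the $1/d_A$ decay of the general-map summand yields the slower $\ln n/n$ scaling. Since both rates vanish as $n\to\infty$, the precise choice of measure on $\rho_A$ does not affect the final statement of the theorem, but it does govern how fast the fluctuations are suppressed and hence reproduces the $1/n$ versus $\ln n/n$ distinction emphasized throughout the paper.
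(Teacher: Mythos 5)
Your proposal is correct and follows essentially the same route as the paper: both use the fixed-$d_A$ expressions from Observation~\ref{obs2}, interpret the full fluctuation as the $n\to\infty$ limit of the Ces\`aro average over auxiliary dimensions, kill the general-map case via the harmonic-type decay of $1/(d_Bd_A+1)$ together with Cauchy's limit theorem, and then dispose of the CPTP case via the sandwich $0\leq(\Delta\mathcal{E}_{\mathrm{CPTP}})^2_{d_A}\leq(\Delta\mathcal{E}_{\mathrm{G}})^2_{d_A}$ from Observation~\ref{obs3}. The only (immaterial) difference is that you squeeze termwise before averaging, whereas the paper averages first and then squeezes the limits.
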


\begin{proof}
Since evaluating the fluctuation in extractable energy from a quantum battery under random general quantum maps requires averaging over all possible dimensions of auxiliary systems and the corresponding global unitaries acting on all finite dimensional joint battery and auxiliary system, the fluctuation in extractable energy under random general quantum maps can be written as
\begin{align*}
    (\Delta \mathcal{E}_{\mathrm{G}})^2=\lim_{n \rightarrow \infty} (\Delta\mathcal{E}_{\mathrm{G}})^2_n,
\end{align*}
where $(\Delta\mathcal{E}_{\mathrm{G}})^2_n \coloneqq \frac{1}{n} \sum_{d_A=2}^{n+1} (\Delta\mathcal{E}_{\mathrm{G}})^2_{d_A}$. From the expression of  $(\Delta\mathcal{E}_{\mathrm{G}})^2_{d_A}$ given in Observation~\ref{obs2}, it can be seen that $\sum_{i=1}^{d_B^2-1} a_i^2$ is independent of $d_A$. Therefore we can write $(\Delta \mathcal{E}_{\mathrm{G}})^2$ as follows:
\begin{eqnarray*}
\label{all_N}
    (\Delta \mathcal{E}_{\mathrm{G}})^2=\left[\lim_{n \rightarrow \infty} \frac{1}{n}\left(\sum_{d_A=2}^{n+1}\frac{1}{d_Bd_A+1}
\right)\right]\sum_{i=1}^{d_B^2-1} a_i^2.
\end{eqnarray*}

Now, note that the sequence $\left\{\frac{1}{d_Bd_A+1}\right\}$ forms a subsequence of the harmonic sequence $\left\{\frac{1}{m}\right\}$, where $m$ denotes any positive integer. As the latter converges to zero, the former too
converges to zero. This is due to the fact that all the subsequences of a convergent sequence with a limit $l$ converge to the same limit $l$~\cite{knopp}. 
Cauchy's limit Theorem~\cite{knopp}, states that if a sequence $\left\{c_l\right\}$ converges to a limit $l$, the corresponding sequence of the arithmetic mean of the first $n$ members of the sequence also converges to the same limit $l$. 
Hence, the sequence $\left\{\frac{1}{n}\left(\sum_{d_A=2}^{n+1}\frac{1}{d_Bd_A+1} \right)\right\}$, which is the sequence of the arithmetic mean of the first $n$ members of the sequence $\left\{\frac{1}{d_Bd_A+1}\right\}$, also converges to zero.
Therefore, $ (\Delta \mathcal{E}_{\mathrm{G}})^2=0$.

Since Observation~\ref{obs3} states that $(\Delta \mathcal{E}_{\mathrm{CPTP}})^2_{d_A} \leq (\Delta \mathcal{E}_{\mathrm{G}})^2_{d_A}$ for \emph{all} $d_A$, it follows that $(\Delta \mathcal{E}_{\mathrm{CPTP}})^2 \leq (\Delta \mathcal{E}_{\mathrm{G}})^2$.  
But $ (\Delta \mathcal{E}_{\mathrm{G}})^2=0$, from which it follows that  $(\Delta \mathcal{E}_{\mathrm{CPTP}})^2\leq0$.
On the other hand, $ (\Delta \mathcal{E}_{\mathrm{CPTP}})^2 \geq 0 $, as it represents the variance of the extractable energy, and the variance of any physical quantity is always non-negative. 
Together, these inequalities imply that $(\Delta \mathcal{E}_{\mathrm{CPTP}})^2=0$. 
This completes the proof.
\end{proof}

Therefore, open-system batteries offer greater advantages over closed ones when it comes to minimizing fluctuations in energy extraction.

\subsection{Trade-off in between resource-cost and the performance of open-system batteries}
Both CPTP maps and arbitrary quantum maps processes exhibit zero fluctuation in extractable energy. 
To make a comparison of the robustness of these two processes, we consider the maps realized by auxiliary systems up to a fixed highest dimension $n$ instead of considering all possible dimensional auxiliary systems, and see how the fluctuation in extractable energy decays with increasing $n$ for CPTP and any general maps.

\begin{theorem} \label{th3}
The fluctuation of extractable energy arising from CPTP maps falls off faster with $n$, the highest dimension of the auxiliary among all the considered auxiliaries, than that resulting from arbitrary quantum maps, with the scaling of the former being $1/n$ while that of the latter is $\ln n/n$.
\end{theorem}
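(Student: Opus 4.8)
The plan is to start from the closed-form fixed-auxiliary expressions in Observation~\ref{obs2} and average each of them over all auxiliary dimensions $d_A$ from $2$ to $n+1$ with uniform weight $1/n$, exactly as the averaging $(\Delta\mathcal{E}_{\mathrm{G}})^2_n := \frac{1}{n}\sum_{d_A=2}^{n+1}(\Delta\mathcal{E}_{\mathrm{G}})^2_{d_A}$ was set up in the proof of Theorem~\ref{th2} (and analogously for the CPTP case). Since $\sum_{i=1}^{d_B^2-1}a_i^2$ is independent of $d_A$, the whole problem reduces to extracting the leading $n$-dependence of the two bracketed sums in
\[
(\Delta\mathcal{E}_{\mathrm{G}})^2_n = \frac{1}{n}\left(\sum_{d_A=2}^{n+1}\frac{1}{d_Bd_A+1}\right)\sum_{i=1}^{d_B^2-1}a_i^2,
\]
\[
(\Delta\mathcal{E}_{\mathrm{CPTP}})^2_n = \frac{1}{n}\left(\sum_{d_A=2}^{n+1}\frac{d_Bd_A\alpha(d_A)-1}{d_B^2d_A^2-1}\right)\sum_{i=1}^{d_B^2-1}a_i^2.
\]

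For the general-map case I would sandwich the summand between two multiples of the harmonic term: from $d_Bd_A \le d_Bd_A+1 \le (d_B+1)d_A$ (valid since $d_A\ge 2$) one gets $\frac{1}{(d_B+1)d_A}\le \frac{1}{d_Bd_A+1}\le\frac{1}{d_Bd_A}$, so the bracket is trapped between $\frac{1}{d_B+1}(H_{n+1}-1)$ and $\frac{1}{d_B}(H_{n+1}-1)$, where $H_{n+1}$ denotes the harmonic number. Using $H_{n+1}=\ln n+O(1)$ shows the bracket is $\Theta(\ln n)$, hence $(\Delta\mathcal{E}_{\mathrm{G}})^2_n=\Theta(\ln n/n)$.

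For the CPTP case the crux is the dimension dependence of $\alpha(d_A)=\Tr(\rho_B^2)\int d\rho_A\,\Tr(\rho_A^2)$. The plan is to invoke the known mean purity of a random auxiliary state, $\int d\rho_A\,\Tr(\rho_A^2)=\frac{2d_A}{d_A^2+1}$ for the Hilbert–Schmidt measure (or, more generally, any measure for which the averaged auxiliary purity decays as $\Theta(1/d_A)$), so that $d_Bd_A\alpha(d_A)=\frac{2d_Bd_A^2\Tr(\rho_B^2)}{d_A^2+1}\to 2d_B\Tr(\rho_B^2)$ stays $O(1)$ and, using $\Tr(\rho_B^2)\ge 1/d_B$, bounded away from $1$ from above. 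Dividing by $d_B^2d_A^2-1=\Theta(d_A^2)$ yields a summand of order $1/d_A^2$; since $\sum_{d_A\ge 2}1/d_A^2$ converges to a finite positive constant, the bracket converges, and therefore $(\Delta\mathcal{E}_{\mathrm{CPTP}})^2_n=\Theta(1/n)$. Comparing the two rates then establishes that the CPTP fluctuation falls off faster.

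The main obstacle I expect is precisely the evaluation (or a sharp $\Theta(1/d_A)$ two-sided bound) of the averaged auxiliary purity $\int d\rho_A\,\Tr(\rho_A^2)$, because the entire distinction between the $1/n$ and $\ln n/n$ behaviours hinges on this factor forcing the CPTP summand to decay as $1/d_A^2$ rather than $1/d_A$; were the averaged purity $\Theta(1)$ instead, both processes would share the $\ln n/n$ rate. Establishing this scaling either requires citing the explicit Hilbert–Schmidt mean-purity formula or proving the matching upper and lower bounds directly, after which the harmonic-sum and Cesàro-type estimates are routine.
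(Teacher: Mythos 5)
Your proposal is correct and follows essentially the same route as the paper: uniform averaging of the fixed-$d_A$ expressions over $d_A=2,\dots,n+1$, a harmonic-sum sandwich giving $\Theta(\ln n/n)$ for general maps, and the Hilbert--Schmidt mean-purity formula $\int d\rho_A\,\Tr(\rho_A^2)=\tfrac{2d_A}{d_A^2+1}$ (the paper's Eq.~\eqref{alpha}, with $d_C=d_A$) forcing the CPTP summand to decay as $1/d_A^2$ so that the sum converges and the average is $\Theta(1/n)$. The only cosmetic difference is that the paper establishes convergence of the CPTP sum via explicit telescoping bounds rather than by citing convergence of $\sum 1/d_A^2$, and you correctly flag the averaged-purity scaling as the crux of the whole distinction.
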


The proof is given in Appendix.~\ref{appth3}.

The implication of Theorem~\ref{th3} is that in the large $n$ regime, the fluctuation of extractable energy from quantum battery via CPTP maps decreases more rapidly with the highest dimension of the auxiliary among all the considered auxiliaries, i.e $n$, compared to the fluctuation in extractable energy arising from general quantum maps. Therefore, in the large $n$ regime, it is preferable to use CPTP maps over the general quantum maps, as the fluctuation caused by the CPTP map decreases more sharply with $n$.

\textit{Remark:} Note that we analyze the scaling behavior of fluctuations in extractable energy with respect to $n$, where $n$ denotes the maximum dimension up to which auxiliary systems are considered for realizing CPTP and arbitrary quantum maps.

\begin{figure*}
\includegraphics[scale=0.22]{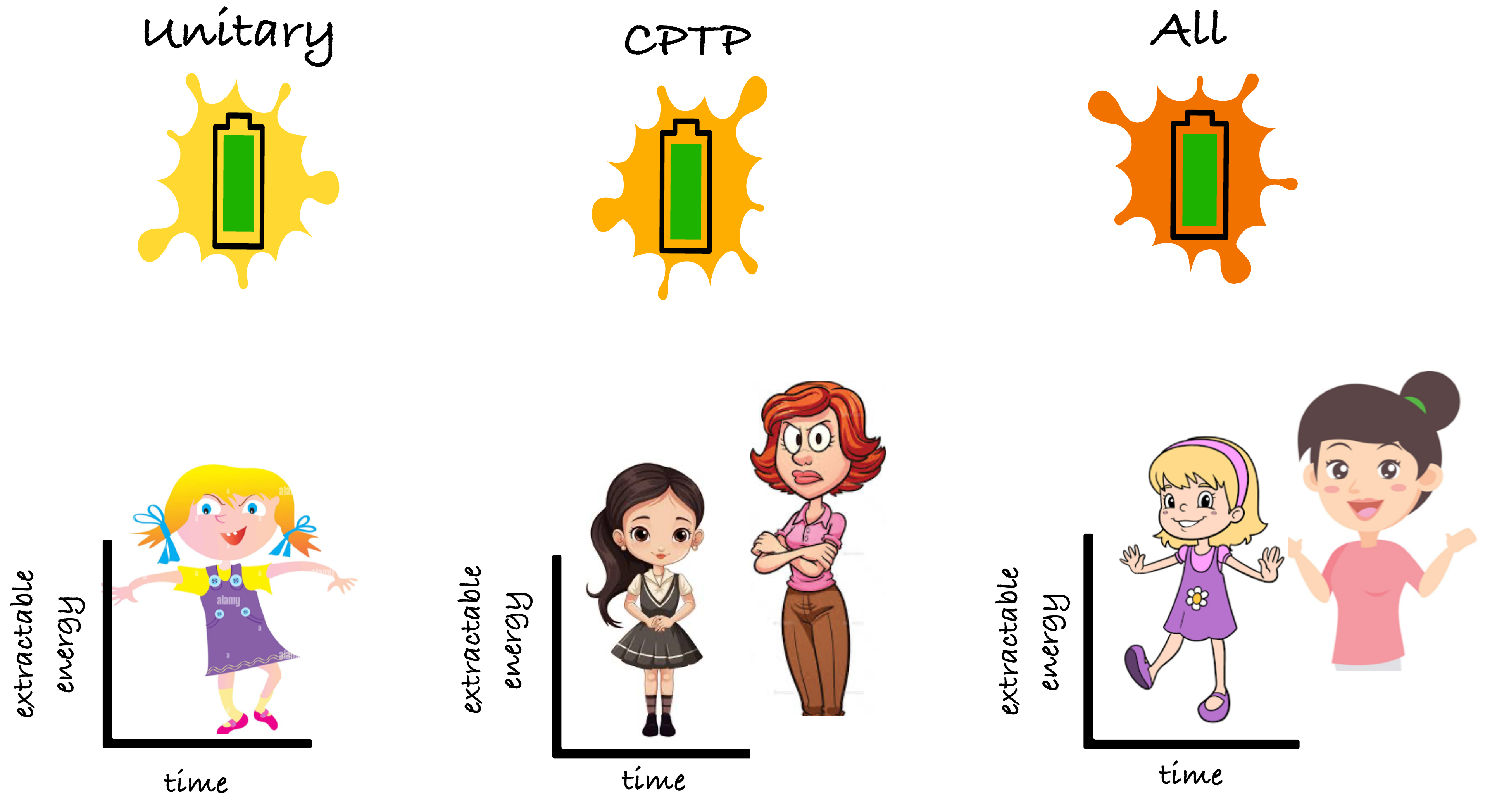}
\caption{\textbf{Schematic representation of the fluctuations in extractable energy from a quantum battery under random unitary, random CPTP, and random general quantum maps.} 
We have shown that the average extractable energy from the battery (depicted as a green rectangle with black border) for all three kinds of energy extraction processes considered is the same, while the fluctuations in extractable energy 
are nonzero exclusively under random unitary maps (described by yellow paint splatter). In contrast, when energy is extracted from the battery using random CPTP maps (golden yellow splatter) or general quantum maps, including physically realizable non-CPTP maps (chrome yellow splatter), the fluctuations in extractable energy vanish. Moreover, the fluctuations in extractable energy decrease more rapidly with the auxiliary system’s maximum dimension under random CPTP maps compared to under general quantum maps.
Our result can be visualized through a real-life scenario of three different days of a little girl: 
on first day, a little girl is home alone without her parents, free to do whatever she wants, represented by the leftmost image of a single girl.  On second day, the girl is under the strict supervision of her mother, depicted in the middle, showing the girl with her mother, while on third day, the girl is with the loving care of her mother, depicted in the lower right corner, where she is neither completely restricted, as in the situation depicted in the middle figure, nor entirely free, as in the leftmost case.
Note that, girl in home alone without her parents, girl under the strict supervision of her mother, girl with the loving care of her mother, represents the fluctuations in extractable energy from a quantum battery under random unitary, random CPTP, and random general quantum maps respectively. 
}
\label{fig2}
\end{figure*}

In the next section, we compare the fluctuations in extractable energy corresponding to the considered extraction processes realized with a fixed auxiliary dimension.

\section{The case of fixed auxiliary dimension}
\label{sec5}
In this section, we figure out the possible situations depending on the hierarchy of the fluctuations in extractable energy from the quantum battery due to the three energy extraction processes: random unitary, random CPTP, and any general maps, including physically realizable NCPTP maps.

We first identify the condition under which the fluctuation in extractable energy associated with arbitrary quantum maps surpasses that of unitary maps, i.e.,
$(\Delta\mathcal{E}_{\mathrm{G}})_{d_A}^2>(\Delta\mathcal{E}_{\mathrm{U}})_{d_A}^2$. This implies that
$\frac{(\Delta\mathcal{E}_{\mathrm{G}})_{d_A}^2}{(\Delta\mathcal{E}_{\mathrm{U}})_{d_A}^2}=\frac{d^2_B-1}{\left(d_B\alpha_1-1\right)\left(d_Ad_B+1\right)}>1$. 
A straightforward analysis reveals that the following condition must be met for the inequality to be valid, given by 
\begin{equation}
    \alpha_1<\frac{d_A+d_B}{1+d_Ad_B}. \label{cond1}
\end{equation}
Thus, for a given auxiliary dimension $d_A$, arbitrary quantum maps give rise to higher amount of fluctuation in extractable energy as compared to unitary maps whenever the purity of the initial state of battery satisfies the condition~\eqref{cond1}.

Let us now examine the condition under which the fluctuation in extractable energy arising from unitary maps exceeds that resulting from CPTP maps, i.e.,
 $(\Delta\mathcal{E}_{\mathrm{U}})_{d_A}^2>(\Delta\mathcal{E}_{\mathrm{CPTP}})_{d_A}^2$. This implies that $\frac{(\Delta\mathcal{E}_{\mathrm{U}})_{d_A}^2}{(\Delta\mathcal{E}_{\mathrm{CPTP}})_{d_A}^2}=\frac{d_B\alpha_1-1}{d_B^2-1}\frac{d_A^2d_B^2-1}{d_Ad_B\alpha-1}>1$. Denoting $\beta_1=\int \Tr\left(\rho_A^2\right) d\rho_A$ and noting that $\alpha_1=\Tr(\rho^2_B)$, we can write $\alpha=\Tr(\rho^2_B)\int \Tr\left(\rho_A^2\right) d\rho_A=\alpha_1\beta_1$. Therefore in terms of the variables $\alpha_1$ and $\beta_1$, the inequality $\frac{(\Delta\mathcal{E}_{\mathrm{U}})_{d_A}^2}{(\Delta\mathcal{E}_{\mathrm{CPTP}})_{d_A}^2}>1$ leads to the following condition:
\begin{equation}
    \alpha_1>\frac{d_B\left(d_A^2-1\right)}{d_A^2d_B^2-1-\beta_1 d_A\left(d_B^2-1\right)}.\label{cond2}
\end{equation}
Thus,~\eqref{cond2} gives the condition under which fluctuation in extractable energy induced by unitary maps surpass that resulting from CPTP maps. Hence, by making use of Observation~\ref{obs3} and the conditions~\eqref{cond1} and~\eqref{cond2}, we are led to the following three scenarios:
\begin{itemize}
    \item $(\Delta \mathcal{E}_{\mathrm{G}})_{d_A}^2>(\Delta \mathcal{E}_{\mathrm{U}})_{d_A}^2>(\Delta \mathcal{E}_{\mathrm{CPTP}})_{d_A}^2$,
     \item $(\Delta \mathcal{E}_{\mathrm{G}})_{d_A}^2>(\Delta \mathcal{E}_{\mathrm{CPTP}})_{d_A}^2>(\Delta \mathcal{E}_{\mathrm{U}})_{d_A}^2$,
      \item $(\Delta \mathcal{E}_{\mathrm{U}})_{d_A}^2>(\Delta \mathcal{E}_{\mathrm{G}})_{d_A}^2>(\Delta \mathcal{E}_{\mathrm{CPTP}})_{d_A}^2$.
\end{itemize}

However, in the case of large dimensional battery, the unitary maps always generate a higher fluctuation in the extractable energy compared to the CPTP maps. To show this, let us work out the quantity $ \frac{(\Delta\mathcal{E}_{\mathrm{U}})_{d_A}^2-(\Delta\mathcal{E}_{\mathrm{CPTP}})_{d_A}^2}{\sum_{i=1}^{d^2_B-1}a^2_i}$, where we assume $d_B$ is sufficiently large such that $\frac{1}{d_B^2}$ and higher-order terms can be neglected. No restriction is placed on the value of auxiliary dimension $d_A$, but it is assumed to be fixed. Thus we have 
\begin{align}
    \frac{(\Delta\mathcal{E}_{\mathrm{U}})_{d_A}^2-(\Delta\mathcal{E}_{\mathrm{CPTP}})_{d_A}^2}{\sum_{i=1}^{d^2_B-1}a^2_i}&=\frac{d_B\alpha_1-1}{d^2_B-1}-\frac{d_Ad_B\alpha-1}{d^2_Ad^2_B-1},\nonumber\\
    &\approx \frac{\left(d_A\alpha_1-\alpha\right)}{d_Bd_A}. \label{ucptpmain}
\end{align}
Now, the quantity $\beta_1=\int \Tr\left(\rho_A^2\right) d\rho_A\leq1$ as it is the averaged purity of the states of all auxiliaries with a given dimension, and the purity of any quantum state has its maximum value of unity. Using the relation $\alpha=\alpha_1\beta_1$, it follows that $\alpha\leq\alpha_1$. Thus, we find that the RHS of Eq.~\eqref{ucptpmain} is nonnegative, which proves our claim.

Let us now perform the same analysis for fluctuations in extractable energy resulting from unitary maps $(\Delta\mathcal{E}_{\mathrm{U}})_{d_A}^2$ and arbitrary quantum maps $(\Delta\mathcal{E}_{\mathrm{G}})_{d_A}^2$. Taking the difference, we have
\begin{align}
    \frac{(\Delta\mathcal{E}_{\mathrm{G}})_{d_A}^2-(\Delta\mathcal{E}_{\mathrm{U}})_{d_A}^2}{\sum_{i=1}^{d^2_B-1}a^2_i}&=\frac{1}{d_Bd_A+1}-\frac{d_B\alpha_1-1}{d^2_B-1},\nonumber\\
    &\approx\frac{1}{d_B}\left[\frac{d_A+d_B}{1+d_Ad_B}-\alpha_1\right]. 
    \label{uncptpmain}
\end{align}
From Eq.~\eqref{uncptpmain}, we can immediately see that the RHS is nonnegative for $\alpha_1<\frac{d_A+d_B}{1+d_Ad_B}$, which is simply Eq.~\eqref{cond1}. Hence, we cannot rule out either of the cases $(\Delta \mathcal{E}_{\mathrm{G}})_{d_A}^2>(\Delta \mathcal{E}_{\mathrm{U}})_{d_A}^2$ or $(\Delta \mathcal{E}_{\mathrm{U}})_{d_A}^2>(\Delta \mathcal{E}_{\mathrm{G}})_{d_A}^2$. 
Consequently, for higher-dimensional batteries, i.e., in the regime of high $d_B$, we are left with only two situations, which are given below.
\begin{itemize}
   \item $(\Delta \mathcal{E}_{\mathrm{G}})_{d_A}^2>(\Delta \mathcal{E}_{\mathrm{U}})_{d_A}^2>(\Delta \mathcal{E}_{\mathrm{CPTP}})_{d_A}^2$,
    \item $(\Delta \mathcal{E}_{\mathrm{U}})_{d_A}^2>(\Delta \mathcal{E}_{\mathrm{G}})_{d_A}^2>(\Delta \mathcal{E}_{\mathrm{CPTP}})_{d_A}^2$.
\end{itemize}
See Appendix~\ref{appC} for the detailed steps leading to Eqs.~\eqref{ucptpmain} and~\eqref{uncptpmain}.
An illustration of the various possible scenarios resulting from a comparison based on the magnitude wise ordering of the fluctuations in extractable energy in the moderate and high $d_B$ regions is depicted in Fig.~\ref{fig4}.

Next, we examine the asymptotic scaling of the fluctuations in extractable energy generated by the use of CPTP maps and general quantum maps with the auxiliary dimension $d_A$, for a given battery. From Observation~\ref{obs2}, $(\Delta \mathcal{E}_{\mathrm{G}})^2_{d_A} = \frac{1}{d_B d_A + 1} \sum_{i=1}^{d_B^2 - 1} a_i^2$. It is evident that $(\Delta\mathcal{E}_{\mathrm{G}})_{d_A}^2 \sim \frac{1}{d_A}$ in the asymptotic limit of $d_A$. This follows from the fact that $\sum_{i=1}^{d_B^2-1} a_i^2$ is independent of $d_A$. To determine the same for CPTP maps, we can write the expression for the corresponding fluctuation in extractable energy as $(\Delta\mathcal{E}_{\mathrm{CPTP}})_{d_A}^2=\frac{d_Ad_B\alpha-1}{d^2_Ad^2_B-1}\approx \frac{\alpha}{d_Ad_B}-\frac{1}{d_A^2d_B^2}$, in the limit of large $d_A$. Here we note that $\alpha$ is a function of $d_A$ (Eq.~\eqref{alpha}) and this dependence must be taken into account when determining the scaling. In particular, $\alpha=\frac{2d_A}{d_A^2\left(1+\frac{1}{d_A^2}\right)}\Tr\left(\rho_B^2\right)\approx\frac{2}{d_A}\Tr\left(\rho_B^2\right)$ in the asymptotic regime. Since $\Tr\left(\rho_B^2\right)$ is independent of $d_A$, we can write $(\Delta\mathcal{E}_{\mathrm{CPTP}})_{d_A}^2\approx \frac{2\Tr\left(\rho_B^2\right)}{d^2_Ad_B}-\frac{1}{d_A^2d_B^2}$. The asymptotic scaling of the fluctuation in extractable energy produced by CPTP maps is now easily determined to be $(\Delta\mathcal{E}_{\mathrm{CPTP}})_{d_A}^2 \sim \frac{1}{d^2_A}$. 

Hence, for a given battery, the fluctuations in energy extraction for CPTP maps and arbitrary maps realized through a fixed-dimensional auxiliary $d_A$ go like $1/d_A^2$ and $1/d_A$, respectively.

\section{Scaling with battery dimension}
\label{sec6}
In this section, we inspect the scaling of the fluctuations in extractable energy resulting from all types of maps with the battery dimension.

To determine the scaling of the fluctuations in extractable energy with the dimension of the battery, we assume that $\sum_{i=1}^{d_B^2-1} a_i^2$ is constant, independent of $d_B$. Then, the asymptotic scaling (at higher values of $d_B$) of the fluctuations for three types of processes for a given auxiliary dimension $d_A$ is given by
\begin{equation}
   (\Delta \mathcal{E}_{\mathrm{U}})_{d_A}^2,(\Delta \mathcal{E}_{\mathrm{CPTP}})_{d_A}^2,(\Delta \mathcal{E}_{\mathrm{G}})_{d_A}^2\sim \frac{1}{d_B}.\nonumber
\end{equation}
Since the fluctuations in extractable energy scale as $1/d_B$ for all three process classes when the auxiliary system has a fixed dimension, the fluctuations in extractable energy under the action of CPTP maps and general quantum maps - averaged over all possible auxiliary system dimensions -  will exhibit the same $1/d_B$ scaling in the large $d_B$ regime, i.e.,
\begin{equation*}
   (\Delta \mathcal{E}_{\mathrm{U}})^2,(\Delta \mathcal{E}_{\mathrm{CPTP}})^2,(\Delta \mathcal{E}_{\mathrm{G}})^2\sim \frac{1}{d_B}.
\end{equation*}
Consequently, the fluctuations in extractable energy vanish in the limit $d_B \to \infty$ for all three process classes.

\begin{figure}
\includegraphics[scale=0.165]{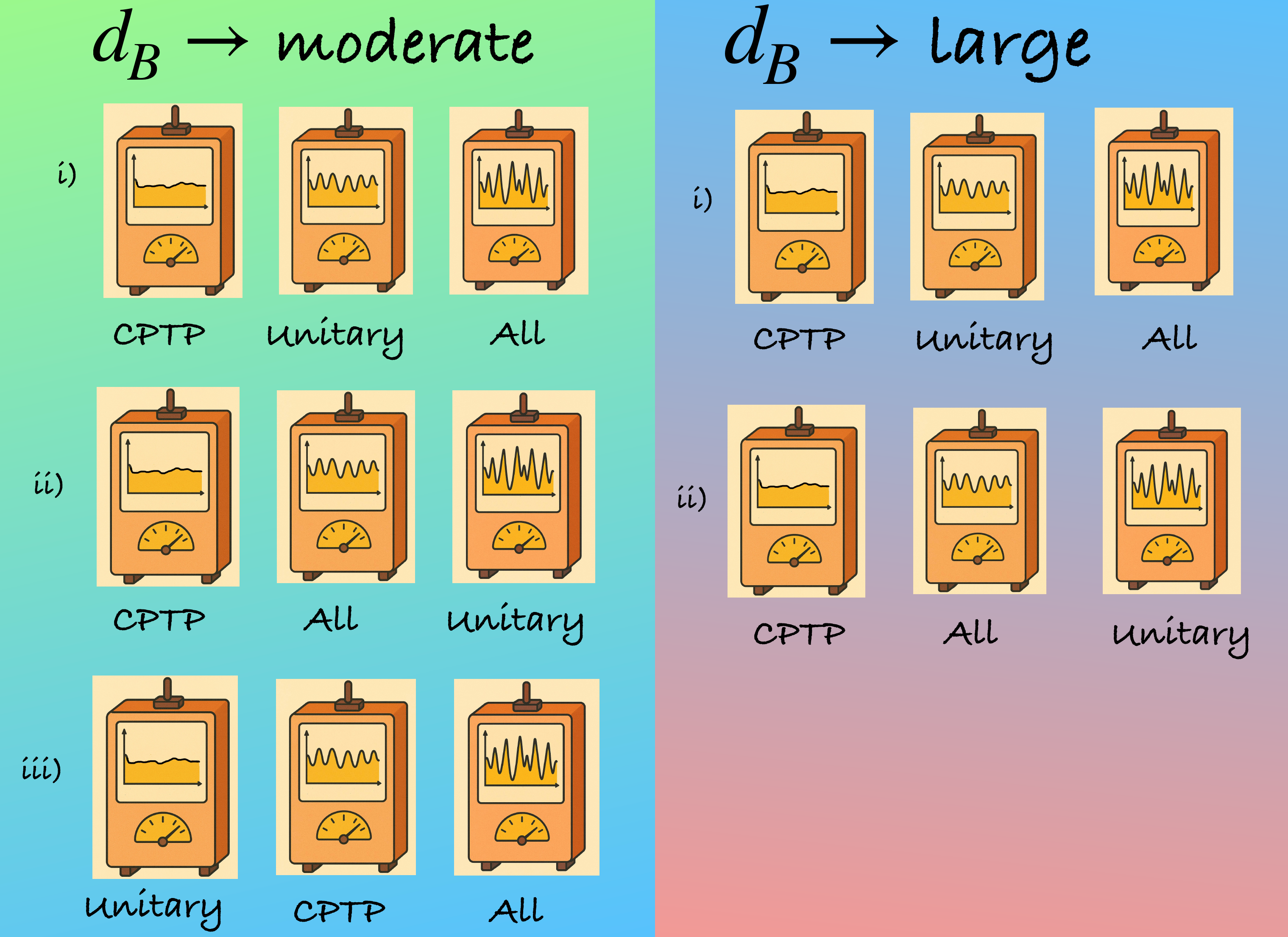}
\caption{\textbf{Schematic representation of the various possible scenarios based on comparing the fluctuations in extractable energy resulting from all three kinds of energy extraction processes.}
The readings of the meter boxes represent the magnitude of the fluctuations in extractable energy, while the labels below each box denote the type of the quantum map used in energy extraction. A higher degree of "rippling" in the readings indicates a greater magnitude of the fluctuation. The left side of the partition depicts the situation when the battery dimension ranges from small or moderate values. In this region, unitary maps can generate higher as well as lower amount of fluctuation in extractable energy in comparison with the rest of the maps. The right side of the partition shows the regime of high battery dimension. Here, unitary maps always produce a larger amount of fluctuation in extractable energy in comparison with CPTP maps. Since CPTP maps are proven to generate a lesser (or equal) amount of fluctuation in extractable energy as compared to general quantum maps for all possible values of the dimension of the battery, this is common to both sides of the partition of the figure.}
\label{fig4}
\end{figure}

From the above discussion, it is clear that for high-dimensional batteries—such as those composed of many-body systems—all three processes exhibit the same
robustness. This is because the scaling of fluctuations in extractable energy with respect to battery dimension is the same across all cases. Moreover, since the average extractable energy remains the same for all three processes, unitary maps are favorable for high-dimensional battery systems, as they require the least resources among the three classes of energy extraction processes.

\vspace{-2mm}
\section{Conclusion}\label{sec7}
Quantum thermal devices have become one of the key areas of research in present-day quantum technologies. Although these devices are expected to outperform their classical counterparts, they face a major challenge due to the presence of unavoidable fluctuations in the associated thermodynamic quantities. Thus, a successful realization of these machines requires finding suitable methods to minimize the fluctuations.

Quantum batteries constitute an important example of quantum thermal devices. The main function of a quantum battery is to store energy, which can be extracted and put to use in various tasks. Different types of quantum maps, viz. unitary maps, completely positive trace-preserving (CPTP) maps, and noncompletely positive trace-preserving (NCPTP) maps, can be used for the purpose of energy extraction.

The main aim of this work has been to determine, both qualitatively and quantitatively, which of these processes should be employed to achieve the best possible performance from a quantum battery with the least fluctuations. As a measure of efficiency, we considered the average extractable energy and the corresponding fluctuations in extractable energy.

We used randomly chosen unitary maps, CPTP maps, and arbitrary quantum maps (including NCPTP maps). We found that an equal amount of energy can be extracted on average, using all three types of processes. Henceforth, we calculated the fluctuations in extractable energy for all processes. It was seen that a nonzero amount of fluctuation persists in the case of unitary maps (except when the battery begins in a maximally mixed state); while employing CPTP maps and general quantum maps, the fluctuations in extractable energy become zero.

Note that CPTP maps and arbitrary quantum maps are implemented using an auxiliary system when necessary. 
Restricting the dimension of the auxiliary up to a finite value $n$, we estimated the scaling of the finite-term averages of the fluctuations in extractable energy for CPTP maps and arbitrary quantum maps over the considered auxiliary dimensions with respect to $n$. The averaged fluctuation in extractable energy generated by CPTP maps is found to scale as $1/n$, which falls off to zero at a faster rate compared to the scaling of $\ln n/n$ of the averaged fluctuation in extractable energy produced by arbitrary quantum maps.

Moreover, we investigated the characteristics of the fluctuations in extractable energy for the three processes while the battery and the auxiliary dimension are kept fixed. It was seen that the fluctuation in extractable energy arising from the use of CPTP maps as the energy extraction process, implemented with a fixed dimensional auxiliary, is upper bounded by the corresponding fluctuation in extractable energy when arbitrary quantum maps, having the same dimension of the auxiliary system as in the case of CPTP maps, are employed in the energy extraction process. Fluctuation in extractable energy produced by unitary maps, however, can be higher as well as lower, as compared to the fluctuations in extractable energy arising from the use of CPTP maps and arbitrary quantum maps. We specifically derived the conditions under which the fluctuation in extractable energy generated by unitary maps exceeds the corresponding fluctuations in extractable energy generated by CPTP and arbitrary quantum maps. Next, we found the scaling of fluctuations in extractable energy with the auxiliary dimension $d_A$ for CPTP and arbitrary quantum maps realized by auxiliaries with a fixed dimension $d_A$. Fluctuations in extractable energy for CPTP maps and arbitrary quantum maps scale as $1/d_A^2$ and $1/d_A$, respectively, for a given battery.

Furthermore, we inspected the large-scale behavior of the fluctuations in extractable energy arising from all three classes of processes with the battery dimension $d_B$. The fluctuations for all three types of processes vary as $1/d_B$. Hence, for quantum batteries with large dimension, unitary maps are just as effective as more resource-intensive open maps in minimizing fluctuation, making unitary maps more preferable choice.

Our findings highlight several key takeaways. Open-system evolution can significantly enhance the performance of a fixed quantum battery compared to closed (unitary) dynamics. Moreover, when open evolution is approached either via CPTP maps or more general quantum maps—including physically realizable non-CPTP maps—CPTP maps are found to outperform the broader class, even though non-CPTP operations typically require greater resources for their implementation. This counter-intuitive result suggests a fundamental trade-off between the performance of a quantum battery and the resource cost associated with implementing extraction protocols.
On the other hand, unitary dynamics can match the low fluctuations achieved by more resource-intensive open processes (CPTP and general quantum maps) for sufficiently large batteries. Unitary maps become more favorable, offering comparable performance with reduced implementation cost, for large quantum batteries. These insights deepen our understanding of the performance of a quantum battery from a statistical perspective under different types of extraction processes.

\section*{Acknowledgment}
We acknowledge the use of Armadillo and QIClib – a modern C++ library for quantum information and computation (\url{https://titaschanda.github.io/QIClib}). 
PG acknowledges support from the ‘INFOSYS scholarship for senior students’ at Harish-Chandra Research Institute, India.

\bibliography{ref}
\onecolumngrid

\section*{Appendix}
\appendix
\setcounter{figure}{0}

\section{Important concepts and formulae} \label{appA}
The Haar measure~\cite{Mele2024introductiontohaar} provides the unique probability measure on the unitary group $\mathcal{U}(d)$.  It is invariant under both left and right shift of elements of the unitary group. That is, for $U,V\in \mathcal{U}(d)$ and any integrable function $f$, $\int dU f(U)= \int f(UV) dU=\int f(VU) dU$. Being a probability measure, it is normalized as $\int dU=1$. 
For any operator $O \in \mathcal{L}(\mathbb{C}^d)^{\otimes k}$, the unitary integral $ I^k(O)=\int dU U^{\otimes k} O(U^{\dagger})^{\otimes k}$
taken over the Haar measure can be explicitly computed~\cite{Rivas_2012,matrix-integral,Roberts2017,PRXQuantum.2.010201}. In particular, for $k=1,2$,

\begin{align}
\label{A1}
    &\int_{\mathcal{U}(d)} dU\, U O U^{\dagger} = \Tr(O)\, \frac{\mathbb{I}_d}{d}, \quad \int_{\mathcal{U}(d)} dU\, U^{\otimes 2} O (U^{\dagger})^{\otimes 2}
    = \frac{1}{d^2 - 1} \left[ \left( \Tr(O) - \frac{\Tr(\mathbb{S} O)}{d} \right) \mathbb{I}_d^{\otimes 2} + \left( \Tr(\mathbb{S} O) - \frac{\Tr(O)}{d} \right) \mathbb{S} \right].
\end{align}

The operator $\mathbb{S}$ appearing in Eq.~\eqref{A1} is the SWAP operator, defined on the operator space $\mathcal{L}\left(\mathcal{H}^d \otimes \mathcal{H}^d\right)$. It can be written as $\mathbb{S}=\sum_{i,j=0}^{d-1} \ket{ij}\bra{ji}$, 
where the set $\{\ket{i}\}$ forms an orthonormal basis of $\mathcal{H}^d$.
For two operators $P, Q \in \mathcal{L}(\mathcal{H}^d)$, 
\begin{equation}
    \Tr[(P \otimes Q)\mathbb{S}]=\Tr[PQ]. \label{A3}
\end{equation} 

Let $S\in \mathcal{L}(\mathcal{H}^{d_1}_A\otimes\mathcal{H}^{d_1}_B)$ and $S_1\in \mathcal{L}\left(\left(\mathcal{H}^{d_1}_A\otimes\mathcal{H}^{d_2}_C\right)\otimes \left(\mathcal{H}^{d_1}_B\otimes\mathcal{H}^{d_2}_D\right)\right)$ be two SWAP operators. The following identity relates $S_1$ and $S$,
\begin{equation}
    \Tr_{C,D}S_1=d_2S. \label{swap1}
\end{equation}

\section{Proofs of Observations~\ref{obs1},~\ref{obs2} and Theorems~\ref{th1} and~\ref{th3}}\label{appB}
Here we give the explicit calculations for the Observations~\ref{obs1},~\ref{obs2} and Theorems~\ref{th1},~\ref{th3} in chronological order. 
\subsection{Calculations of Theorem~\ref{th1}} \label{B1}
The extractable energy from a quantum battery with a state $\rho_B$ and a Hamiltonian $H_B$ can be written as $\mathcal{E}=E-E'$, where $E=\Tr\left(\rho_BH_B\right)$ and $E'=\Tr\left(\rho'_BH_B\right)$ describe the energies of the battery before and after the energy extraction process. In particular, for energy extraction involving unitary maps, $E'_\mathrm{U}\coloneqq \Tr\left(U_B\rho_BU^{\dagger}_BH_B\right)$. Therefore, the average unitarily extractable energy obtained from the battery is given as $ \overline{\mathcal{E}}_{\mathrm{U}}=\overline{E}_{\mathrm{U}}-\overline{E'}_{\mathrm{U}}
    =\Tr\left(\rho_BH_B\right)-\int dU_B\Tr\left(U_B\rho_BU^{\dagger}_B\right)$. Here $\overline{E}_{\mathrm{U}}=E=\Tr\left(\rho_BH_B\right)$
since the initial energy of the battery is independent of the particular process used for energy extraction and therefore a constant. We evaluate the second term using Eq.~\eqref{A1} as,
\begin{align}
   \overline{E'}_{\mathrm{U}}&=\int  dU_B  \Tr\left(\rho'_BH_B\right)=  \Tr\left[\left(\int  dU_BU_B \rho_B U^{\dagger}_B \right)H_B\right] 
   = \Tr\left[\left(\Tr(\rho_B) \frac{\mathbb{I}_{d_B}}{d_B}\right)H_B\right]=\frac{\Tr(H_B)}{d_B}. \label{u4}
\end{align}
 Hence, the average energy  extractable with the help of unitary maps assumes the following form: $ \overline{\mathcal{E}}_{\mathrm{U}}=\Tr\left(\rho_BH_B\right)- \frac{\Tr(H_B)}{d_B}$.

Now, we are left with calculating the average energy extractable using CPTP and general quantum maps. The average energy for these maps is given by the equation $\overline{\mathcal{E}}_\mathrm{X}=\overline{E}_\mathrm{X}-\overline{E'}_{\mathrm{X}}$, 
where $\mathrm{X}=\mathrm{CPTP}$, $\mathrm{G}$ and $\overline{E'}_{\mathrm{X}}\coloneqq \Tr\left[\Tr_A\left(U_{BA} \rho_{BA} U^{\dagger}_{BA}\right)H_B\right]$. The term $\overline{E}_\mathrm{X}=\Tr(\rho_B H_B)$ as before. To proceed with the calculations, we first consider the case where we only have those auxiliary systems with a fixed dimension $d_A$. The average extractable energy for this case is given by $\left(\overline{\mathcal{E}}_\mathrm{X}\right)_{d_A}=\Tr\left(\rho_BH_B\right)-\int \int dU_{BA} d\rho_A\Tr\left(\rho'_BH_B\right)$. Evaluating the second term, we get
\begin{align}
   \overline{E'}_{\mathrm{X}}&=\int \int dU_{BA} d\rho_A \Tr\left(\rho'_BH_B\right)=\Tr\biggl[\Tr_A\left(\int dU_{BA} U_{BA} \rho_{BA}U^{\dagger} _{BA}\right)H_B\biggr],\nonumber\\   &=\Tr\biggl[\Tr_A\left(\Tr(\rho_{BA})\frac{\mathbb{I}_{d_B d_A}}{d_B d_A}\right)H_B\biggr]=\frac{\Tr(H_B)}{d_B}. 
\end{align}
Thus, the average energy extractable using CPTP maps and general quantum maps for a fixed auxiliary dimension have the following form: $\left(\overline{\mathcal{E}}_\mathrm{X}\right)_{d_A}= \Tr\left(\rho_BH_B\right)-\frac{\Tr(H_B)}{d_B}$. As this is true for \emph{any} value of $d_A$, we must have $\overline{\mathcal{E}}_\mathrm{X}=\left(\overline{\mathcal{E}}_\mathrm{X}\right)_{d_A}=\Tr\left(\rho_BH_B\right)-\frac{\Tr(H_B)}{d_B}$, where $\overline{\mathcal{E}}_\mathrm{X}$ represents the average extractable energy using processes based on CPTP maps and general quantum maps (without the constraint of having a fixed dimensional auxiliary system). We therefore have $\overline{\mathcal{E}}_{\mathrm{U}}=\overline{\mathcal{E}}_{\mathrm{CPTP}}=\overline{\mathcal{E}}_{\mathrm{G}}$, for all three kinds of quantum processes.
This completes the proof.
\subsection{Calculations of Observation~\ref{obs1}} 
\label{appObs1}
The fluctuation in the extractable energy during the process of extraction of energy from a quantum battery, initialized in a state $\rho_B$ and with a Hamiltonian $H_B$, using unitary maps, is given in Eq.~\eqref{Efluc1}. Using the notation $E$ and $E'$ as before, Eq.~\eqref{Efluc1} can be simplified and conveniently recast in the following way. We can write $\overline{\mathcal{E}^2}_\mathrm{U}=\overline{(E_\mathrm{U}-E'_\mathrm{U})^2}, (\overline{\mathcal{E}}_\mathrm{U})^2=\left(\overline{(E_\mathrm{U}-E'_\mathrm{U})}\right)^2$. Therefore, $(\Delta \mathcal{E}_\mathrm{U})^2=\overline{\mathcal{E}^2}_\mathrm{U}-(\overline{\mathcal{E}}_\mathrm{U})^2=E_\mathrm{U}^2+\overline{E'^2}_\mathrm{U}-2E_\mathrm{U}\overline{E'}_\mathrm{U}-E_\mathrm{U}^2-(\overline{E'}_\mathrm{U})^2+2E_\mathrm{U}\overline{E'}_\mathrm{U}=\overline{E'^2}_\mathrm{U}-(\overline{E'}_\mathrm{U})^2$. The first term $\overline{E'^2}_\mathrm{U}$ can be computed as,
\begin{align*}
    \overline{E'^2}_{\mathrm{U}}&=\int  dU_B \left[\Tr\left(\rho'_BH_B\right)\right]^2=\int dU_B \left[\Tr\left(\rho'_BH_B \otimes \rho'_BH_B\right)\right]=\Tr\left[\left(\int dU_B \rho'^{\otimes2}_B\right)H^{\otimes2}_B\right]=\Tr\left[zH^{\otimes2}_B\right]. 
\end{align*}
The factor $z$ can be worked out in a straightforward manner,
\begin{align*}
    z& \coloneqq\int dU_B  \rho'^{\otimes2}_B=\int dU_B U^{\otimes2}_B\rho^{\otimes2}_B (U^{\dagger} _B)^{\otimes2}=\frac{1}{d_B^2-1}\Biggl[\biggl(\Tr(\rho^{\otimes2}_B)-\frac{\Tr(S\rho^{\otimes2}_B)}{d_B}\biggr)\mathbb{I}^{\otimes 2}_{d_B}+\biggl(\Tr(S\rho^{\otimes2}_B)-\frac{\Tr(\rho^{\otimes2}_B)}{d_B}\biggr)S\Biggr]. 
\end{align*}

Here $S$ refers to the SWAP operator, introduced in Appendix~\ref{appA}. Now $\Tr(\rho^{\otimes2}_B)=(\Tr(\rho_B))^2=1$. Using Eq.~\eqref{A3}, 
$\Tr(S\rho^{\otimes2}_B)=\Tr(\rho^2_B)=\alpha_1$. So we have $z=\frac{1}{d_B^2-1}\left[\biggl(1-\frac{\alpha_1}{d_B}\biggr)\mathbb{I}^{\otimes 2}_{d_B}+\biggl(\alpha_1-\frac{1}{d_B}\biggr)S\right]$.
The second term $(\overline{E'}_\mathrm{U})^2$ is obtained by squaring $\overline{E'}_{\mathrm{U}}$ derived in the Appendix~\ref{B1}. We then have the following formula for the fluctuation in extractable energy,  
\begin{align*}
   (\Delta \mathcal{E}_\mathrm{U})^2=\frac{1}{d_B^2-1}\biggl[\left(1-\frac{\alpha_1}{d_B}\right)\left(\Tr H_B\right)^2+\left(\alpha_1-\frac{1}{d_B}\right)\Tr \left(H_B^2\right)\biggr]-\frac{\left(\Tr H_B\right)^2}{d_B^2}. 
\end{align*}
Using the expansion of the basis of the Gell-Mann matrices of $H_B$ in subsec~\ref{VA}, the values of $\Tr(H_B)$ and $\Tr\left(H^2_B\right)$ can be immediately worked out to obtain the following: $\Tr(H_B)=a_0 d_B,  \Tr\left(H^2_B\right)=d_B\left(a^2_0+\sum_{i=0}^{d^2_B-1} a^2_i\right)$. Using these, we are led to,
\begin{align}
   (\Delta \mathcal{E}_\mathrm{U})^2&=\frac{1}{d_B^2-1}\left[a^2_0\left(d^2_B-1\right)+\left(\sum_{i=0}^{d^2_B-1} a^2_i\right)\left(\alpha_1d_B-1\right)\right]-a^2_0=\frac{d_B \alpha_1 -1}{d_B^2-1} \sum_{i=1}^{d_B^2-1} a_i^2.
\end{align}

This gives the desired expression.
\vspace{-5mm}
\subsection{Calculations of Observation~\ref{obs2}}
\label{appobs2}
Eq.~\eqref{Efluc2} defines the fluctuations in extractable energy appearing in the process of energy extraction from a quantum battery, initialized in a state $\rho_B$ and having a Hamiltonian $H_B$, by employing CPTP maps, or any map from the set of all possible quantum maps, including NCPTP maps. We can simplify Eq.~\eqref{Efluc2} and write $\Delta \mathcal{E}_\mathrm{X}^2=\overline{E'^2}_\mathrm{X}-\overline{E'}_\mathrm{X}^2$, just as we have done while proving Observation~\ref{obs1} in Appendix~\ref{appObs1}. As before, we use the subscript X  to refer to CPTP and G. We restrict ourselves to a fixed dimensional auxiliary $d_A$. With this restriction, let us first consider the case of CPTP maps. Proceeding in the similar way as in Appendix~\ref{appObs1},
\begin{align*}
      \overline{E'^2}_{\mathrm{CPTP}}&=\int \int  dU_{BA} d\rho_A \left[\Tr\left(\rho'_BH_B\right)\right]^2=\Tr\left[zH^{\otimes2}_B\right].  
\end{align*}
Working out the factor $z$,
\begin{align*}
    z& \coloneqq\int \int  dU_{BA} d\rho_A  \rho'^{\otimes2}_B=\int \int  dU_{BA} d\rho_A \biggl[ \Tr_A\left(U_{BA} \rho_{BA} U^{\dagger}_{BA}\right) \otimes \Tr_A\left(U_{BA} \rho_{BA} U^{\dagger}_{BA}\right)\biggr]. 
\end{align*}
 At this point, we use the following identity (see Appendix \ref{appC} for the proof),
\begin{align}
     \Tr_A\left(U_{BA} \rho_{BA} U^{\dagger}_{BA}\right)\otimes \Tr_A\left(U_{BA} \rho_{BA} U^{\dagger}_{BA}\right)=\Tr_A\left(U_{BA} \rho_{BA} U^{\dagger}_{BA} \otimes U_{BA} \rho_{BA} U^{\dagger}_{BA}\right). \label{id}
\end{align}
Using Eq.~\eqref{id}, we can evaluate $z$, 
\begin{align*}
    z &=\Tr_A\biggl[\int \int  dU_{BA} d\rho_A \biggl(U_{BA} \rho_{BA} U^{\dagger}_{BA} \otimes U_{BA} \rho_{BA} U^{\dagger}_{BA}\biggr)\biggr],  \nonumber\\
    &=\frac{1}{d_B^2d_A^2-1}\Tr_A \left[\left(\Tr Y-\frac{\Tr\left(S_1Y\right)}{d_Bd_A}\right)\mathbb{I}^{\otimes 2}_{d_Bd_A}+\left(\Tr\left(S_1Y\right)-\frac{\Tr Y}{d_Bd_A}\right)S_1\right]. 
\end{align*}
Where $S_1$ is introduced in Appendix~\ref{appA} and $Y\coloneqq\int d\rho_A \rho_{BA}^{\otimes2}$. 
Now, $\Tr(Y)=\int d\rho_A \Tr\left(\rho_{BA}^{\otimes2}\right)=\int d\rho_A \left[\Tr\left(\rho_{BA}\right)\right]^2=1$.
Labelling $\alpha=\Tr(S_1Y)$, we have $\alpha=\int d\rho_A \Tr\left(S_1\rho_{BA}^{\otimes2}\right)=\int d\rho_A \Tr\left(\rho_{BA}^2\right)$.
For CPTP maps, $\rho_{BA}=\rho_B \otimes \rho_A$. We then have,
\begin{align}
   \alpha &=\int d\rho_A \Tr\biggl[\left(\rho_B \otimes \rho_A\right)\left(\rho_B \otimes \rho_A\right)\biggr]=\int d\rho_A \Tr\left(\rho^2_B \otimes \rho^2_A\right)=\Tr\left(\rho_B^2\right)\int \Tr\left(\rho_A^2\right) d\rho_A. \label{x3}
\end{align}
Ref.~\cite{mixedness-avg} explicitly calculated the integral appearing in Eq.~\eqref{x3} by considering the purification of the system $A$ (the auxiliary system in this work) over a bipartite Hilbert space, consisting of the concerned system along with an ancilla system $C$, with a dimension $d_C$. The result is given in the following form,
\begin{equation}
    \int \Tr\left(\rho_A^2\right) d\rho_A=\frac{\Gamma(d_Ad_C)}{\Gamma(d_Ad_C+2)} \Tr J(2). \label{x}
\end{equation}
Where $J$ is a $d_A\times d_A$ matrix, with the elements given by,
\begin{equation}
\begin{split}
    J_{ij}(r)=\sum_{p=0}^{d_A-1} \frac{\Gamma(d_C-d_A+r+p+1)\Gamma(j+1)\Gamma(r+1)^2}{\Gamma(d_C-d_A+i+1)\Gamma(i-p+1)\Gamma(r+p-i+1)\Gamma(j-p+1)\Gamma(r-j+p+1)\Gamma(p+1)}. \label{Xij}
\end{split}
\end{equation}
Here $\Gamma(n)=\int_0^{\infty} e^{-x} x^{n-1} dx$ is the Gamma function. 
\vspace{-0.12em}
Using the properties of Gamma function, Eq.~\eqref{x} can be simplified to give, 
\begin{equation}
    \int \Tr\left(\rho_A^2\right) d\rho_A=\frac{d_A+d_C}{d_Ad_C+1}. \label{xf}
\end{equation}
The derivation of Eq.~\eqref{xf} is given in Appendix \ref{appC}. 
In our work, we assume $d_C=d_A$. 
Putting Eq.~\eqref{xf} in Eq.~\eqref{x3} with this assumption and using $\alpha_1=\Tr\left(\rho_B^2\right)$, we get 
\begin{equation}
\alpha=\frac{2d_A}{d_A^2+1}\alpha_1. \label{alpha}   
\end{equation}
The expression for $z$ can be written as,
\begin{align}
    z &=\frac{1}{d_B^2d_A^2-1}\Tr_A \left[\left(1-\frac{\alpha}{d_Bd_A}\right)\mathbb{I}^{\otimes 2}_{d_Bd_A}+\left(\alpha-\frac{1}{d_Bd_A}\right)S_1\right]=\frac{1}{d_B^2d_A^2-1} \left[\left(1-\frac{\alpha}{d_Bd_A}\right)d^2_A\mathbb{I}^{\otimes 2}_{d_B}+\left(\alpha-\frac{1}{d_Bd_A}\right)d_AS\right]. \label{z11}
\end{align}
This gives,
\begin{align*}
\overline{E'^2}_{\mathrm{CPTP}}&=\frac{1}{d_B^2d_A^2-1}\Tr\biggl[\left(1-\frac{\alpha}{d_Bd_A}\right)d^2_A\mathbb{I}^{\otimes 2}_{d_B} H_B^{\otimes2}+\left(\alpha-\frac{1}{d_Bd_A}\right)d_ASH_B^{\otimes2}\biggr],\nonumber\\
&=\frac{1}{d_A^2d_B^2-1}\biggl[\left(1-\frac{\alpha}{d_Ad_B}\right)d_A^2\left(\Tr H_B\right)^2+\left(\alpha-\frac{1}{d_Ad_B}\right)d_A\Tr \left(H_B^2\right)\biggr]. \label{Ecp2}
\end{align*}
Using the expressions for $\Tr \left(H_B^2\right)$ and $\left(\Tr H_B\right)^2$ in terms of the basis expansion given in subsec.~\ref{VA}, we can now write down the expression for fluctuation in extractable energy corresponding to CPTP maps for a fixed battery and auxiliary dimension in a compact form,
\begin{align}
    (\Delta \mathcal{E}_{\mathrm{CPTP}})^2_{d_A}&= \frac{1}{d_A^2d_B^2-1}\Biggl[\left(1-\frac{\alpha}{d_Ad_B}\right)d_A^2\left(\Tr H_B\right)^2+\left(\alpha-\frac{1}{d_Ad_B}\right)d_A\Tr \left(H_B^2\right)\Biggr]-\frac{\left(\Tr H_B\right)^2}{d_B^2}= \frac{d_Bd_A\alpha-1}{d_B^2d_A^2-1} \sum_{i=1}^{d_B^2-1} a_i^2.
\end{align}
We are left with the case where a map is chosen from the set of general quantum maps, including NCPTP maps. In this case, the initial state of the composite battery-auxiliary system $\rho_{BA}$ is no longer constrained to be of the product form $\rho_B \otimes \rho_A$ and can be any valid quantum state, including entangled state. 
Here we restrict ourselves to the case where $\rho_{BA}$ is a pure entangled state. Accordingly, $ \alpha =\int d\rho_A \Tr_{BA}\left(\rho_{BA}^2\right)=1$. Then, 
\begin{align*}
     z&=\frac{1}{d_B^2d_A^2-1}\Tr_A \left[\left(1-\frac{1}{d_Bd_A}\right)\left(\mathbb{I}^{\otimes 2}_{d_Bd_A}+S_1\right)\right]=\frac{1}{d_B^2d_A^2-1}\left(1-\frac{1}{d_Bd_A}\right)\left(d^2_A\mathbb{I}^{\otimes 2}_{d_B}+d_AS\right),
\end{align*}
where we have used Eq.~\eqref{swap1}.
The fluctuation in extractable energy in this case becomes,
\begin{align}
    (\Delta \mathcal{E}_\mathrm{G})^2_{d_A}&=\frac{1}{d_A^2d_B^2-1}\biggl[\left(1-\frac{1}{d_Ad_B}\right)d_A^2\left(\Tr H_B\right)^2+\left(1-\frac{1}{d_Ad_B}\right)d_A\Tr \left(H_B^2\right)\biggr]-\frac{\left(\Tr H_B\right)^2}{d_B^2}=\frac{1}{d_Bd_A+1} \sum_{i=1}^{d_B^2-1} a_i^2.
\end{align}
Thus we obtain the desired expressions. 
\vspace{-0.7cm}
\subsection{Calculations of Theorem~\ref{th3}}\label{appth3}
The fluctuation in extractable energy arising from general quantum maps, averaged over auxiliary dimensions from $d_A=2$ up to a limit $d_A=n$ can be written as $\left\langle(\Delta \mathcal{E}_{\mathrm{G}})^2\right\rangle_{n}= \frac{1}{n}\sum_{d_A=2}^{n+1}\frac{\beta}{d_Bd_A+1} =\frac{\beta}{nd_B}\sum_{d_A=2}^{n+1}\frac{1}{d_A+\frac{1}{d_B}}$, where we have denoted $\beta=\sum_{i=1}^{d_B^2-1} a_i^2$. Using the following chain of inequalities $\frac{1}{d_A+1}<\frac{1}{d_A+\frac{1}{d_B}}<\frac{1}{d_A}$, we can bound the summation as
$\sum_{d_A=2}^{n+1}\frac{1}{d_A+1}<\sum_{d_A=2}^{n+1}\frac{1}{d_A+\frac{1}{d_B}}<\sum_{d_A=2}^{n+1}\frac{1}{d_A}$. Both upper and lower bounds correspond to a finite sum of the Harmonic sequence $\left\{\frac{1}{m}\right\}$ which is approximated by the following finite sum approximation $\sum_{m=1}^N \frac{1}{m} \approx \ln N+\gamma$, with $\gamma\approx0.577$. This enables us to calculate the bounds exactly. Consequently, the lower bound is given by $\sum_{d_A=2}^{n+1}\frac{1}{d_A+1}=\sum_{d_A=1}^{n+2}\frac{1}{d_A}-1-\frac{1}{2}=\ln (n+2)+\gamma-\frac{3}{2}$. 
Similarly, the upper bound is calculated as $\sum_{d_A=2}^{n+1}\frac{1}{d_A}=\sum_{d_A=1}^{n+1}\frac{1}{d_A}-1=\ln (n+1)+\gamma-1$. Then, the finite-term average fluctuation in extractable energy generated by general quantum maps is bounded as $ \frac{1}{n}\left[\ln (n+2)+\gamma-\frac{3}{2}\right]<\sum_{d_A=2}^{n+1}\frac{1}{d_A+\frac{1}{d_B}}<\frac{1}{n}\left[\ln (n+1)+\gamma-1\right]$. 
Considering large $n$, $\frac{\ln n}{n}$ dominates over $\frac{\gamma-1}{n}$ and hence determines the scaling
\begin{equation}
   \left\langle(\Delta \mathcal{E}_{\mathrm{G}})^2\right\rangle_{n}\sim \frac{\ln n}{n}. 
   \label{scaling1}
\end{equation}
Likewise, we can write the expression for the fluctuation in extractable energy arising from CPTP maps, averaged over auxiliary dimensions from $d_A=2$ to $d_A=n$ as $\left\langle(\Delta \mathcal{E}_{\mathrm{CPTP}})^2\right\rangle_{n}= \frac{\beta}{n}\sum_{d_A=2}^{n+1}\frac{d_Bd_A\alpha-1}{d_B^2d_A^2-1} =\frac{\beta}{nd_B}\sum_{d_A=2}^{n+1}\biggl[\frac{d_Bd_A\alpha}{d_B^2d_A^2-1}-\frac{1}{d_B^2d_A^2-1}\biggr]$.
Using the form of $\alpha$ given in Eq.~\eqref{alpha}, we have
\begin{align*}
   \left\langle(\Delta \mathcal{E}_{\mathrm{CPTP}})^2\right\rangle_{n}&= \frac{\beta}{n}\sum_{d_A=2}^{n+1}\left[\frac{2\alpha_1d^2_A}{\left(d^2_A+1\right)d_B\left(d_A^2-\frac{1}{d_B^2}\right)}-\frac{1}{d_B^2\left(d_A^2-\frac{1}{d^2_B}\right)}\right],\nonumber\\
   &=\frac{\beta}{n}\sum_{d_A=2}^{n+1}\left[\frac{1}{\left(d_A^2-\frac{1}{d^2_B}\right)}\left\{\frac{2\alpha_1}{d_B}\left[1-\frac{1}{\left(1+\frac{1}{d^2_B}\right)}\right]-\frac{1}{d^2_B}\right\}+\frac{1}{\left(1+\frac{1}{d^2_B}\right)}\frac{1}{\left(d^2_A+1\right)}\right]. 
\end{align*}
Introducing two functions $ p(d_B,\alpha_1)=\frac{2\alpha_1}{d_B}\left[1-\frac{1}{\left(1+\frac{1}{d^2_B}\right)}\right]-\frac{1}{d^2_B}$ and
      $q(d_B)=\frac{1}{1+\frac{1}{d^2_B}}$, we can write the expression for $ \left\langle(\Delta \mathcal{E}_{\mathrm{CPTP}})^2\right\rangle_{n}$ in a compact form as $\left\langle(\Delta \mathcal{E}_{\mathrm{CPTP}})^2\right\rangle_{n}=\frac{\beta}{n}\sum_{d_A=2}^{n+1}\left[\frac{p}{d_A^2-\frac{1}{d^2_B}}-\frac{q}{d^2_A+1}\right]$. We now use the following chain of inequalities $\frac{1}{d^2_A+d_A}<\frac{1}{d_A^2-\frac{1}{d^2_B}},\frac{1}{d^2_A+1}<\frac{1}{d^2_A-d_A}$ to bound both terms of the above sum as $ \sum_{d_A=2}^{n+1}\frac{1}{d^2_A+d_A}<\sum_{d_A=2}^{n+1}\frac{1}{d_A^2-\frac{1}{d^2_B}},\sum_{d_A=2}^{n+1}\frac{1}{d^2_A+1}<\sum_{d_A=2}^{n+1}\frac{1}{d^2_A-d_A}$. 
The bounds can be computed by expressing them as a difference between sums of two Harmonic sequences and using the finite sum approximation as before. The lower bound is given as $\sum_{d_A=2}^{n+1}\frac{1}{d_A^2+d_A}=\sum_{d_A=2}^{n+1}\left(\frac{1}{d_A}-\frac{1}{d_A+1}\right)=\frac{1}{2}-\frac{1}{n+2}$, while the upper bound is $\sum_{d_A=2}^{n+1}\frac{1}{d_A^2-d_A}=\sum_{d_A=2}^{n+1}\left(\frac{1}{d_A-1}-\frac{1}{d_A}\right)=1-\frac{1}{n+1}$. The finite-term average fluctuation in extractable energy generated by CPTP maps is then bounded in the following manner: $\frac{1}{2n}-\frac{1}{n\left(n+2\right)}< \frac{1}{n}\sum_{d_A=2}^{n+1}\frac{1}{d_A^2+1},\frac{1}{n}\sum_{d_A=2}^{n+1}\frac{1}{d^2_A-\frac{1}{d^2_B}}<\frac{1}{n}-\frac{1}{n\left(n+1\right)}$.
For large values of $n$, $\frac{1}{n(n+1)}$ approaches to zero very quickly as compared to $\frac{1}{n}$ and hence can be neglected. Thus the scaling is given by
\begin{equation}  \left\langle(\Delta\mathcal{E}_{\mathrm{CPTP}})^2\right\rangle_{n} \sim \frac{1}{n}. \label{scaling2}
\end{equation}
This completes the proof. 
\vspace{-10mm}
\section{Some derivations} \label{appC}
This section contains detailed derivations of Eqs.~\eqref{id} and~\eqref{xf}. Also, we give the calculations for the intermediate steps in the derivations of conditions~\eqref{ucptpmain} and~\eqref{uncptpmain}.

We will first prove Eq.~\eqref{id}. Let $\left\{P^B_i\right\}$ and $\left\{Q^A_i\right\}$ be the basis elements of the operator spaces $\mathcal{L}(\mathcal{H}_B)$ and $\mathcal{L}(\mathcal{H}_A)$ respectively. Then, the set $\left\{P^B_i \otimes Q^A_j\right\}$ constitutes a basis of the tensor product space $\mathcal{L}(\mathcal{H}_B) \otimes \mathcal{L}(\mathcal{H}_A)$. For finite-dimensional vector spaces, $\mathcal{L}(\mathcal{H}_B) \otimes \mathcal{L}(\mathcal{H}_A) \cong \mathcal{L}(\mathcal{H}_B \otimes\mathcal{H}_A)$, so that the set $\left\{P^B_i \otimes Q^A_j\right\}$ is also a basis for the space $\mathcal{L}(\mathcal{H}_B \otimes\mathcal{H}_A)$. We can expand the operators $U_{BA}$, $\rho_{BA}$, $U^\dagger_{BA}$ $\in \mathcal{L}(\mathcal{H}_B \otimes\mathcal{H}_A)$ in this basis as $U_{BA}=\sum_{i,j} u_{ij} P^B_i \otimes Q^A_j,\hspace{0.3cm}
    \rho_{BA}=\sum_{i,j} \rho_{ij} P^B_i \otimes Q^A_j,\hspace{0.3cm}
    U^\dagger_{BA}=\sum_{i,j} \Tilde{u}_{ij} P^B_i \otimes Q^A_j$
($\Tilde{u}_{ij} \neq u_{ij}^{*}$ in general, where $u_{ij}^{*}$ denotes complex conjugate of $u_{ij}$). Using this, we can write,
\begin{align*}
    \Tr_A\left(U_{BA}\rho_{BA}U^\dagger_{BA}\right)&=\Tr_A\Biggl[\sum_{ijklmn}u_{ij}\rho_{kl}\Tilde{u}_{mn}\biggl(P^B_i \otimes Q^A_j\biggr)\biggl(P^B_k \otimes Q^A_l\biggr)\biggl(P^B_m \otimes Q^A_n\biggr)\Biggr],\\
    &=\sum_{ijklmn}u_{ij}\rho_{kl}\Tilde{u}_{mn}\Tr\biggl(Q^A_jQ^A_lQ^A_n\biggr)P^B_iP^B_k P^B_m.
\end{align*}
Here we have used the linearity property of the partial trace. Thus we have
\begin{align*}  &\Tr_A\left(U_{BA}\rho_{BA}U^\dagger_{BA}\right)\otimes \Tr_A\left(U_{BA}\rho_{BA}U^\dagger_{BA}\right)\\&=\sum_{ijklmn}\sum_{opqrst}u_{ij}\rho_{kl}\Tilde{u}_{mn}u_{op}\rho_{qr}\Tilde{u}_{st}\Tr\biggl(Q^A_jQ^A_lQ^A_n\biggr)\Tr\biggl(Q^A_pQ^A_rQ^A_t\biggr)\biggl(P^B_iP^B_kP^B_m \otimes P^B_oP^B_qP^B_s\biggr). 
\end{align*}
Now, $U_{BA} \rho_{BA} U^{\dagger}_{BA} \otimes U_{BA} \rho_{BA} U^{\dagger}_{BA} \equiv U_{B_1A_1} \rho_{B_1A_1} U^{\dagger}_{B_1A_1} \otimes U_{B_2A_2} \rho_{B_2A_2} U^{\dagger}_{B_2A_2} \in \mathcal{L}\left( \left(\mathcal{H}_{B_1}\otimes\mathcal{H}_{A_1}\right)\otimes \left(\mathcal{H}_{B_2}\otimes\mathcal{H}_{A_2}\right)\right)$. We have used the dummy indices $A_1,A_2$ and $B_1,B_2$ to refer to the two copies of systems $A$ and $B$ respectively for the sake of clarity and ease of calculation. Now we have
\begin{align*}
&\Tr_A \left( U_{BA} \rho_{BA} U_{BA}^\dagger \otimes U_{BA} \rho_{BA} U_{BA}^\dagger \right) 
\\&= \Tr_{A_1 A_2} \left( U_{B_1 A_1} \rho_{B_1 A_1} U_{B_1 A_1}^\dagger \otimes U_{B_2 A_2} \rho_{B_2 A_2} U_{B_2 A_2}^\dagger \right) \\
&= \Tr_{A_1 A_2} 
\sum_{ijklmn} \sum_{opqrst}\left[  u_{ij} \rho_{kl} \tilde{u}_{mn} u_{op} \rho_{qr} \tilde{u}_{st}
\left( P_i^{B_1} P_k^{B_1} P_m^{B_1} \otimes Q_j^{A_1} Q_l^{A_1} Q_n^{A_1} \right)  \otimes \left( P_o^{B_2} P_q^{B_2} P_s^{B_2} \otimes Q_p^{A_2} Q_r^{A_2} Q_t^{A_2} \right) 
\right] \\
&= \sum_{ijklmn} \sum_{opqrst} u_{ij} \rho_{kl} \tilde{u}_{mn} u_{op} \rho_{qr} \tilde{u}_{st} 
\left[ \Tr \left( Q_j^{A_1} Q_l^{A_1} Q_n^{A_1} \right) \Tr \left( Q_p^{A_2} Q_r^{A_2} Q_t^{A_2} \right)  \times \left( P_i^{B_1} P_k^{B_1} P_m^{B_1} \otimes P_o^{B_2} P_q^{B_2} P_s^{B_2} \right) \right] \\
&= \Tr_A \left( U_{BA} \rho_{BA} U_{BA}^\dagger \right) \otimes \Tr_A \left( U_{BA} \rho_{BA} U_{BA}^\dagger \right).
\end{align*}
This completes the proof.

Next, we give the proof of Eq.~\eqref{xf}. Eq.~\eqref{Xij} gives us the expression for the matrix $J(2)$. Putting $r=2$ and taking the trace, we have
\begin{equation}
    \Tr J(2)=\sum_{i=0}^{d_A-1}\sum_{p=0}^{d_A-1} \frac{\Gamma(d_C-d_A+p+3)\Gamma(i+1)\left(\Gamma(3)\right)^2}{\Gamma(d_C-d_A+i+1)\left(\Gamma(i-p+1)\right)^2\left(\Gamma(3-i+p)\right)^2\Gamma(p+1)} . \label{gamma}
\end{equation}
We note that the arguments of the Gamma functions appearing in Eq.~\eqref{gamma} are all integers and therefore must be strictly positive since $\Gamma(x)$ for integer $x$ is defined only for $x>0$.  Ref.~\cite{mixedness-avg} assumed that $d_C\geq d_A$ so that $d_C-d_A \geq0$, leading to $d_C-d_A+i+1>0$. Since $p$ starts from zero, $p+1>0$. Thus, we must choose $i,p$ such that $i-p+1>0$ and $3-i+p>0$. This implies $i-p=0,1,2$. Note that for $d_A=2$, $i$ and $p$ can take values $0,1$ only, which means the case $i-p=2$ does not occur. We will handle the case $d_A=2$ separately. First, let us compute for $d_A\geq3$ for which $i-p=2$ is valid. 

Using the relation between $i$ and $p$, we can calculate $\Tr J(2)$ by writing $i$ in terms of $p$ or vice versa. We choose the former approach here. For the case $i=p$, $p$ runs from $0$ to $d_A-1$ as usual. However, for $i=p+1$, p can only take values up to $d_A-2$, starting from $0$. This is because $p=d_A-1$ would mean $i=d_A$, which is not possible since $i$ can go only up to $d_A-1$, as seen in Eq.~\eqref{gamma}. Similarly, the sum on $p$ for $i=p+2$ can go up to $d_A-3$. Keeping this in mind and substituting $i=p,p+1,p+2$ in Eq.~\eqref{gamma}, we have
\begin{align*}
    \Tr J(2)&=\sum_{p=0}^{d_A-1} \frac{\Gamma(d_C-d_A+p+3)\Gamma(p+1)\Gamma(3)^2}{\Gamma(d_C-d_A+p+1)\Gamma(1)\Gamma(p+1)\Gamma(3)^2}+\sum_{p=0}^{d_A-2} \frac{\Gamma(d_C-d_A+p+3)\Gamma(p+2)\Gamma(3)^2}{\Gamma(d_C-d_A+p+2)\Gamma(p+1)\Gamma(2)^4}\\&+\sum_{p=0}^{d_A-3} \frac{\Gamma(d_C-d_A+p+3)\Gamma(p+3)\Gamma(3)^2}{\Gamma(d_C-d_A+p+3)\Gamma(1)^2\Gamma(p+1)\Gamma(3)^2}.
\end{align*}
As $\Gamma(x)=(x-1)!$ for any integer $x>0$, this can be simplified to
\begin{align*}
    \Tr J(2)&= \sum_{p=0}^{d_A-1}\left(d_C-d_A+p+1\right)\left(d_C-d_A+p+2\right) +   \sum_{p=0}^{d_A-2}4(p+1)\left(d_C-d_A+p+2\right)+\sum_{p=0}^{d_A-3} (p+1)(p+2),\\&=J_1+J_2+J_3.
\end{align*}

Using the formulae $\sum_{i=1}^N p=\frac{N(N+1)}{2}$ and $\sum_{i=1}^N p^2=\frac{N(N+1)(2N+1)}{6}$, the terms $J_1$, $J_2$ and $J_3$ can be evaluated as
\begin{align*}
  J_1&=\sum_{p=0}^{d_A-1}\left(d_C-d_A+p+1\right)\left(d_C-d_A+p+2\right)=d_A\left(d_C-d_A\right)^2+\left(d_C-d_A\right)\left(d_A^2+2d_A\right)
  +\frac{2d_A^3+6d_A^2+4d_A}{6}.
\end{align*}
\begin{align*}
    J_2&=\sum_{p=0}^{d_A-2}4(p+1)\left(d_C-d_A+p+2\right),\\
   &=4\left(d_C-d_A\right)\left[\frac{\left(d_A-2\right)\left(d_A-1\right)}{2}+\left(d_A-1\right)\right]+4\biggl[\frac{\left(d_A-2\right)\left(d_A-1\right)\left(2d_A-3\right)}{6}+\frac{3\left(d_A-2\right)\left(d_A-1\right)}{2}+2\left(d_A-1\right) \biggr].
\end{align*}
\begin{align*}
    J_3&=\sum_{p=0}^{d_A-3} (p+1)(p+2)=\frac{\left(d_A-3\right)\left(d_A-2\right)(2d_A-5)}{6}+\frac{3\left(d_A-3\right)\left(d_A-2\right)}{2}+2\left(d_A-2\right). 
\end{align*}
Putting the expressions for $J_1$, $J_2$ and $J_3$ and simplifying, we get
\begin{equation}
    \Tr J(2)=\left(d_A+d_C\right)d_Ad_C. \label{Tr_X(2)}
\end{equation}
We now do the calculations for $d_A=2$. In this case, the allowed values of $i$ are $p,p+1$ which gives us  $\Tr J(2)=J_1+J_2$; the term $J_3$ corresponds to the case $i=p+2$ and hence does not occur. Putting $d_A=2$, $ J_1$ and $J_2$ are obtained as $\sum_{p=0}^{1}\left(d_C-d_A+p+1\right)\left(d_C-d_A+p+2\right)=2d_C^2$ and $4\left(d_C-d_A+2\right)=4d_C$ respectively. Then $ \Tr J(2)=2d_C^2+2^2d_C=\left(d_A+d_C\right)d_Ad_C$. 
We have thus verified the relation $\Tr J(2)=\left(d_A+d_C\right)d_Ad_C$ for all possible values of $d_A$. Now using Eq.~\eqref{Tr_X(2)} in Eq.~\eqref{x}, 
\begin{align}
 \int \Tr\left(\rho_A^2\right) d\rho_A&=\frac{\left(d_Ad_C-1\right)!}{\left(d_Ad_C+1\right)!}\left(d_A+d_C\right)d_Ad_C=\frac{d_A+d_C}{d_Ad_C+1}. 
 \label{avgpurity}
\end{align}
Next, we provide the steps leading to Eqs.~\eqref{ucptpmain} and~\eqref{uncptpmain}. We have,
\begin{align*}
    \frac{(\Delta\mathcal{E}_{\mathrm{U}})^2-(\Delta\mathcal{E}_{\mathrm{CPTP}})^2}{\sum_{i=1}^{d^2_B-1}a^2_i}&=\frac{d_B\alpha_1-1}{d^2_B-1}-\frac{d_Ad_B\alpha-1}{d^2_Ad^2_B-1}
    \approx \frac{1}{d_Bd_A^2}\left[d_A^2\alpha_1-\frac{d_A^2}{d_B}-d_A\alpha+\frac{1}{d_B}\right]
    \approx \frac{\left(d_A^2\alpha_1-d_A\alpha\right)}{d_Bd_A^2}. 
\end{align*}
\begin{align*}
    \frac{(\Delta\mathcal{E}_{\mathrm{G}})^2-(\Delta\mathcal{E}_{\mathrm{U}})^2}{\sum_{i=1}^{d^2_B-1}a^2_i}&=\frac{1}{d_Bd_A+1}-\frac{d_B\alpha_1-1}{d^2_B-1}
    \approx \frac{1}{d_B\left(d_A+\frac{1}{d_B}\right)}\left[1-d_A\alpha_1+\frac{d_A-\alpha_1}{d_B}\right]=\frac{1}{d_B}\left[\frac{d_A+d_B}{1+d_Ad_B}-\alpha_1\right]. 
\end{align*}

\end{document}